\definecolor{red}{rgb}{1,0,0}  
\newcommand{\expect}[1]{\operatorname{E}\left[#1\right]}   
\newcommand{\Bcal}{\mathcal{B}}
\newcommand{\Ccal}{\mathcal{C}}
\newcommand{\Kcal}{\mathcal{K}}
\newcommand{\Ncal}{\mathcal{N}}
\newcommand{\bbf}{{\bf b}}
\newcommand{\Fbf}{{\bf F}}
\newcommand{\Ibf}{{\bf I}}
\newcommand{\xbf}{{\bf x}}
\newcommand{\Nbit}{N_{\rm bit}}
\newcommand{\Nbo}{N_{\rm bo}}
\newcommand{\Npl}{N_{\rm pulse}}  
\newcommand{\Nsymb}{N_{\rm pulse}}  
\newcommand{\Nseg}{N_{\rm seg}}
\newcommand{\Nsc}{N_{\rm sc}}
\newcommand{\Nfft}{N_{\rm fft}}
\newcommand{\Ncp}{N_{\rm cp}}
\newcommand{\jrm}{{\rm j}}
\newtheorem{Cor}{Corollary}
\newtheorem{Lem}{Lemma}
\begin{document}

\title{DFT-s-OFDM-based On-Off Keying \\for Low-Power Wake-Up Signal}
\author{
\IEEEauthorblockN{Renaud-Alexandre Pitaval and Xiaolei Tie}
\thanks{
Renaud-Alexandre Pitaval is with Huawei Technologies Sweden AB, Kista 164 94, Sweden (e-mail: renaud.alexandre.pitaval@huawei.com). 

Xiaolei Tie is with Shanghai Huawei Technologies Co., Ltd, NO.1599, Xin Jinqiao Road, Pudong, Shanghai, P.R.C (e-mail:tiexiaolei@huawei.com).

© 2025 IEEE.  Personal use of this material is permitted.  Permission from IEEE must be obtained for all other uses, in any current or future media, including reprinting/republishing this material for advertising or promotional purposes, creating new collective works, for resale or redistribution to servers or lists, or reuse of any copyrighted component of this work in other works.
}
}

\maketitle

\begin{abstract} 5G-Advanced and likely 6G will support a new low-power wake-up signal (LP-WUS) enabling low-power  devices, equipped with a complementary ultra low-power receiver to monitor wireless traffic, to completely switch off their main radio. This orthogonal frequency-division multiplexed  (OFDM)  signal will emulate an on-off keying (OOK) modulation to enable very low-energy envelope detection at the receiver. Higher rate LP-WUS, containing multiple OOK symbols within single OFDM symbol, will be generated using the time-domain pulse multiplexing of discrete Fourier transform spread (DFT-s-) OFDM. 
In this context, this paper presents a comprehensive signal design framework for DFT-s-OFDM-based OOK generation. 
General properties of subcarrier coefficients 
are derived demonstrating that only DFT of the bits needs to be computed online and repeated over the band before applying appropriate frequency-domain processing.  
The conventional approach of generating rectangular-like OOK waveforms is then addressed by a combination of pre-DFT bit-spreading and post-DFT processing; and the  least-squares (LS)   method from Mazloum and Edfors, proposed for 5G LP-WUS  and also Ambient-IoT, is shown to be implementable as such.  
Even though aesthetically pleasing and of independent interest,  rectangular-like OOK waveforms are not optimal for 5G LP-WUS scenarios
due to their limited robustness to channel frequency-selectivity and timing offset, and so shaping methods for spreading the OOK spectrum and concentrating the OOK  symbol energy are analyzed  and shown to improve the bit error rate performance under practical conditions.

\end{abstract}

\begin{IEEEkeywords}
OOK, DFT-s-OFDM, wake-up signal, Ambient-IoT, 5G-Advanced, 6G.
\end{IEEEkeywords}

\section{Introduction}
The Internet-Of-Things (IoT), consisting of low-power devices, is generating renewed interest in easily decodable modulation schemes such as on-off keying (OOK). 
The concept of a wake-up signal (WUS) has been introduced in several standards within the wireless industry. 
The goal is to enable devices to significantly reduce their functionality, and thus their power consumption, until  incoming traffic is indicated by the network.    
 
In 4G LTE-M and NB-IoT  technologies, 3GPP already defined a WUS to enable receiver sleep modes~\cite{EricssonMag24}, 
but its reception relies on a standard OFDM radio, which still demands precise synchronization and high-precision analog-to-digital conversion (ADC). 
Recently, 3GPP RAN1 completed  a study~\cite{3GPPTR38.869} on a so-called low-power WUS (LP-WUS)  
with a first specification for 5G-Advanced. 
Recent works motivated by this study can be found in~\cite{Eurecom24,FredrikVTC24,AlbertoICC24,Ericsson23,EricssonSoft24}. 
Significantly larger power-saving gains~\cite{MazloumTWC14} are envisioned  if the main radio (MR) of a device could be completely switched off when no message is coming, and  being triggered only when necessary. 
To enable this, the device would need to be equipped with an additional lower-power radio (LR) that monitors possible incoming traffic indicated by an LP-WUS. 
To achieve ultra low-power consumption, the LR could employ a simple non-coherent envelope detector, through which an LP-WUS can convey  information using OOK modulation.  
Meanwhile, at the transmitter side, an OFDM generation of the LP-WUS is preferred to facilitate implementation and orthogonal multiplexing with concurrent transmissions. 
The design of  LP-WUS paved the way for other OFDM-based OOK-modulated signals in 3GPP, already for the OOK signal in Ambient-IoT (A-IoT)~\cite{3GPPTR38.769} and likely for other upcoming 6G IoT technologies.

\emph{Background on OOK for Wi-Fi WUS:}
Generating an OOK-modulated WUS while reusing an existing OFDM implementation has already been considered in the IEEE 802.11ba Wi-Fi standard in order to ease WUS adoption. 
The Wi-Fi WUS supports a low-rate OOK mode matching the OFDM symbol rate and a high-rate mode, doubling the rate through time-blanking within OFDM symbols. 
The latter, however, is non-orthogonal to concurrent OFDM signals, resulting in inter-carrier interference (ICI). 
In the context of Wi-Fi, non-orthogonal WUS designs have been proposed in~\cite{LopezWCNC18,LopezICC19}, and orthogonal OFDM-based designs in~\cite{SahinGC18,SahinGlobecom19,Mazloum20}.  
In~\cite{SahinGC18},  
sequences of subcarrier coefficients are numerically-found under several constraints such as ON/OFF power levels, and a zero direct current (DC) subcarrier as in legacy Wi-Fi signal format. 
In~\cite{SahinGC18}, Golay sequences are used instead to limit the high Peak-to-Average Power Ratio (PAPR) from OFDM multiplexing of multiple, but low-rate, OOK signals.  
A scalable high-rate OFDM-based approach is proposed by Mazloum and Edfors in~\cite{Mazloum20}  
by deriving a precoder, mapping spread bits to subcarrier coefficients,  such that the resulting OFDM signal minimizes the least-square (LS) error from a perfectly-rectangular OOK signal.

\emph{Background on OOK for 5G-Advanced LP-WUS:} When preparing our contributions to 3GPP standardization of LP-WUS, our initial approach  
 independently explored the inherent  time-domain (TD) pulse multiplexing property of DFT-s-OFDM 
to generate higher rate but OFDM-compliant OOK signals. We  first targeted  a rectangular-like OOK  
by controlling pulse combining along with frequency-domain (FD) spectral shaping (FDSS) as well as spectrum extension (SE), and later on investigating  other shaping techniques for performance improvement, including FD repetition, alternative bit-spreading sequences and TD pulse shaping. This is reflected in related 3GPP contributions, e.g.,~\cite{R1-2208419,R1-2300102,R1-2302341}  which discuss the LS method from~\cite{Mazloum20}, $\pm 1$ alternating bit-spreading sequence for rectangular OOK,  Zadoff-Chu (ZC) bit-spreading for improving performance in frequency-selective channels, and concentrated OOK for robustness against timing misalignment. These signal designs were  also captured in~\cite{3GPPTR38.869} as potentially useful for performance improvement.

From the perspective of wireless standard engineers, using DFT-s-OFDM to create OFDM-compliant  OOK signals is a rather natural approach, and  similar considerations were therefore  independently proposed by many 3GPP participants from the beginning of the 3GPP LP-WUS study~\cite{R1-2212749}. In retrospect, we identify  the possibly  first occurrence  of DFT-s-OFDM-based OOK in a Wi-Fi WUS standardization contribution by A. Sahin, R. Yang~\emph{et al.}~\cite{SahinDFTbasedOOK}, where, interestingly, FDSS-SE and ZC bit-spreading are already listed among possible signal design ingredients. However,  as is often the case, standardization contributions focus on addressing specific use cases with a limited scope and less detailed analysis of  signal design alternatives -- for instance the choice of a ZC sequence in~\cite{SahinDFTbasedOOK} is not further justified.

The LS processing from~\cite{Mazloum20} to obtain rectangular-like OOK,  introduced in 3GPP by us~\cite{R1-2208419} and others~\cite{R1-2212749}, was also one of the main approaches considered at the initial stage of the LP-WUS study. As reflected in~\cite{3GPPTR38.869}, this approach was seen by many participants as a specific OOK generation method, involving a specific and so-called ``LS precoding'', distinct from the DFT precoding of DFT-s-OFDM -- even though this LS precoder is simply a larger, truncated DFT matching both the OFDM sampling rate at its input and the bandwidth at its output. 
Direct FD sequence-based OFDM-OOK was also suggested in 3GPP LP-WUS study as an alternative to TD approaches. Eventually, 5G LP-WUS is a DFT-s-OFDM-based OOK signal with ZC overlaid sequence.  

\emph{Background on OOK for 5G-Advanced A-IoT:}
LP-WUS study influenced the more recent 3GPP study on A-IoT~\cite{3GPPTR38.769} for the  reader-to-device OOK waveform generation. In this context, a rectangular-like OOK signal design can be a relevant approach notably as 
 edge/transition  detection between adjacent OOK symbols is likely to be used at receiving devices, for which  OOK symbol flatness becomes a desired signal property~\cite{R1-2500349,R1-2500424}.  
Both DFT-s-OFDM-based and LS approaches  have been considered for A-IoT~\cite{R1-2405441}  and enabled by a signal generation that allows a DFT precoder larger than the subcarrier allocation, unlike LP-WUS. Other waveform generation steps are left to manufacturer-specific implementation choices.

\emph{Motivations, Benefits and Limitations:} In both LP-WUS and A-IoT, the primary motivation of using DFT-s-OFDM for OOK lies in its time-multiplexing property, while still complying with legacy OFDM-based system and hardware, and not for its traditionally considered main benefit which is  a low PAPR.  
Notably, these applications consider  OOK transmission inserted in a downlink OFDM  signal  
whose other subcarriers could carry high-order QAM symbols, resulting overall in a high PAPR. In this context, the transmitted signal as a whole  is OFDM-based and not DFT-s-OFDM-based.

When DFT-s-OFDM is used in 5G uplink data transmission, and with $\pi/2$-BPSK, it can be complemented by FDSS  to achieve very low PAPR.  For higher order QAM, FDSS with SE has been also considered~\cite{Nokia21}. The low-PAPR benefit of FDSS-SE in this context comes  at the cost of breaking  orthogonality among the DFT-s-OFDM multicarrier pulses, thus creating ICI among the data modulation symbols.  For OOK signal design, FDSS-SE can  similarly enable control over the signal's envelope, but in this case, without any interference issue as OOK detection is blind to the orthogonality property of the underlying  DFT-s-OFDM pulses. On the contrary, by limiting energy leakage among consecutive OOK symbols, FDSS-SE can help to mitigate interference among them. 
In any case, the OFDM subcarriers of the OOK signal, with or without FDSS-SE,  
remain orthogonal to those of concurrent transmissions and thus interference-free to them.

Another concern that could arise from OOK -- especially rectangular OOK -- is the potentially large  out-of-band  emission (OOBE). However, with an OFDM-based approach,  most of the power is by design concentrated within the subcarrier allocation, yielding an  OOBE level    comparable to typical OFDM signals~\cite{SahinDFTbasedOOK}. 
The OOBE of OFDM is well known to not directly satisfy  regulatory requirements, but standard filtering techniques such as root-raised-cosine filter can often address this issue effectively.
 Moreover, in practice this OOK signal may be inserted within an OFDM signal of larger bandwidth, rendering its impact   negligible on the overall OOBE.  On the other hand, an inherent limitation of the OFDM-based approach is that the  transition slopes between OOK symbols are constrained by the subcarrier allocation.

\emph{This Paper's Contributions:}  We provide a comprehensive framework for DFT-s-OFDM-based OOK modulation and study both its TD and FD characteristics. 
While primarily motivated by the 5G  LP-WUS application,  
the presented results are general and may be of independent interest for other use cases.  
The considered framework encompasses an overlaid bit-spreading sequence prior to DFT-precoding, possibly combined with FDSS and SE. The  bit-spreading sequence modulates the DFT-s-OFDM pulses, while FDSS-SE offers additional degrees of freedom for shaping the pulses and controlling their number.   

In both Wi-Fi WUS and the 5G LP-WUS study, defining subcarrier sequences directly was also considered as an alternative to DFT-s-OFDM with bit spreading. We link these two methods by showing that a common TD overlaid sequence on top of bit spreading results in an FD overlaid sequence  that creates an ON-symbol kernel whose TD multiplexing is controlled by the DFT of the bits.    
This insight reveals an equivalent low-complexity implementation, where the data-dependent DFT precoding is applied directly to the input bit string --  without the need for explicit bit spreading.  
Furthermore,  compact DFT formulas and tables that absorb Manchester encoding are derived for efficient storage. 

Next,  we focus on the design of  rectangular-like OOK waveforms, as considered in Wi-Fi WUS, the early stage of the 5G LP-WUS study, and 5G A-IoT study.  A linear phase ramp is applied in the bit-spreading sequence to minimize the coherent combining of consecutive DFT-s-OFDM pulses, thereby flattening the ON-symbol envelope.  
We also demonstrate how the bit-spreading sequence can be chosen  to obtain a rectangular-like OOK waveform  with the additional feature of a zero DC component. Moreover, we show that the LS waveform from~\cite{Mazloum20} arises  
as a special case of the considered DFT-s-OFDM-based OOK modulation, using a specific FDSS window.

Finally, we move beyond the intuitive approach of regular and rectangular-like OOK waveform  to propose and analyze shaping methods  better suited for  LP-WUS application scenarios.    
Rectangular-like OOK designs, though relevant to small indoor coverage applications such as Wi-Fi and A-IoT, are not optimal for 5G LP-WUS.  Large cell coverage, as supported by the 5G NR standard, involves larger bandwidth usage and increased sensibility to   
frequency-selectivity and channel time dispersion. Additionally, 3GPP considered a significantly higher sampling rate at LR compared to the transmitted OOK symbol rate.  In this context, as long as the OOK signal achieves a sufficient ON/OFF energy split, envelope fluctuations actually matter little.

We formalize the effect on the power spectrum distribution obtained from phase scrambling by an ZC overlaid sequence, as considered in~\cite{R1-2302213} to flatten the spectrum. More precisely, we show that this results in a constant spectral comb over the bandwidth, containing half the total signal power and thereby providing excellent frequency diversity.  
Additionally, we investigate the TD concentration of ON-symbol energy to enhance robustness against timing errors. Simulation results confirm the performance benefits of such shaping methods over rectangular OOK in LP-WUS application. 

\emph{Organization:} The remainder of the paper is as follows. Section II introduces the LP-WUS system model from an OFDM transmitter to an envelope detector at the LR. Section III  describes the OOK signal generation framework based on DFT-s-OFDM modulation highlighting its general TD and FD properties.  Section IV addresses the generation of rectangular-like OOK waveform, while Section V discussed shaping methods relevant for mitigating wireless channel impairments. Performance evaluations are presented in Section VI, and conclusions are drawn in Section VII.

\section{LP-WUS System Model}
We consider an OOK-modulated LP-WUS multiplexed along other data using a conventional OFDM transmitter, thus enabling concurrent transmission 
without interference.

\subsection{OFDM Modulation}
The LP-WUS is generated by populating $\Nsc$ subcarriers within other data symbols, all multiplexed by OFDM using an $\Nfft$-point IFFT followed by cyclic prefix (CP) addition. While such setup avoids interference to concurrent data received by OFDM terminals,  the WUS  itself is intended for a non-OFDM, envelope-detection receiver for which  concurrent data may interfere, and thus some guard bands may still be necessary. The total LP-WUS bandwidth allocation is thus $(\Nsc + 2 N_{\rm GB})$ where $N_{\rm GB}$ are null subcarriers on each band side. 

Formally, the transmitted baseband OFDM signal $s[n]$  is a superposition of the WUS  $s^W [n]$ and concurrent data signal $s^D [n]$ where one CP-OFDM symbol with sample indices $-\Ncp\leq n \leq \Nfft-1$ can be expressed as 
\begin{equation}  
s[n] =\sum_{k=0}^{\Nfft-1} X[k] e^{\jrm \frac{ 2\pi}{\Nfft}nk} =  s^W[n] + s^D [n] 
\end{equation}
and the LP-WUS is
\begin{equation} \label{eq:s^W[n]} 
s^W[n]= e^{\jrm  \frac{2\pi}{\Nfft} n f_0} \sum_{k=0}^{\Nsc-1} X^W [k] e^{\jrm \frac{ 2\pi}{\Nfft}nk} , 	
\end{equation}
i.e. it is the $\Nfft$-point inverse DFT (IDFT), typically implemented as a fast Fourier Transform (FFT), of the $\Nsc$  WUS subcarrier coefficients $\{X^W [k]\}_{k=0}^{\Nsc-1} =\{X [k]\}_{k=f_0}^{f_0+\Nsc-1}  $ allocated at the the subcarrier indices $\{f_0,\ldots,f_0+\Nsc-1\}$.

The LP-WUS is a complex signal where information is conveyed through the fluctuation of the envelope’s amplitude; thus its starting subcarrier index $f_0$ is irrelevant from signal design perspective as it only changes its global phase. For convenience, we will assume $f_0=0$ in analytical derivations.  The value of $\Nsc$ will often be assumed even for simplicity and practical relevance. For numerical evaluation and illustration, $\Nfft=512$ will be used. 

A CP protects OFDM symbols from interfering on each other after channel dispersion.  From the OOK signal perspective only, CP insertion is mainly detrimental, with its presence and length imposed on LP-WUS for system compliance.   
The CP reduces the OOK transmission rate while  offering no interference protection when more than one OOK symbol is transmitted per OFDM symbol. 
To avoid this rate reduction, the targeted OOK signal is punctured in~\cite{Mazloum20}, thereby  maintaining a regular bit period even after CP insertion, rendering LP-WUS synchronization independent of the OFDM structure, but at the cost of increased errors in bit detection.  
Moreover, CPs can introduce false rising/falling edges --  a concern raised  in the A-IoT study~\cite{3GPPTR38.769}. In both LP-WUS and A-IoT, the main  CP handling approach assumes that the receiving device  removes it without requiring specific transmitter-side assistance. 
For high-rate OOK, where CP becomes comparable to or longer than an OOK symbol, receiver-only handling may become insufficient, and transmitter-side handling -- such as reserving the last OOK symbols  to control the CP, as  discussed in~\cite{3GPPTR38.769} -- may also be needed.

\begin{figure*}[t] 
\centering
	 \vspace{-0.2cm} 
\includegraphics[width=0.75\textwidth]{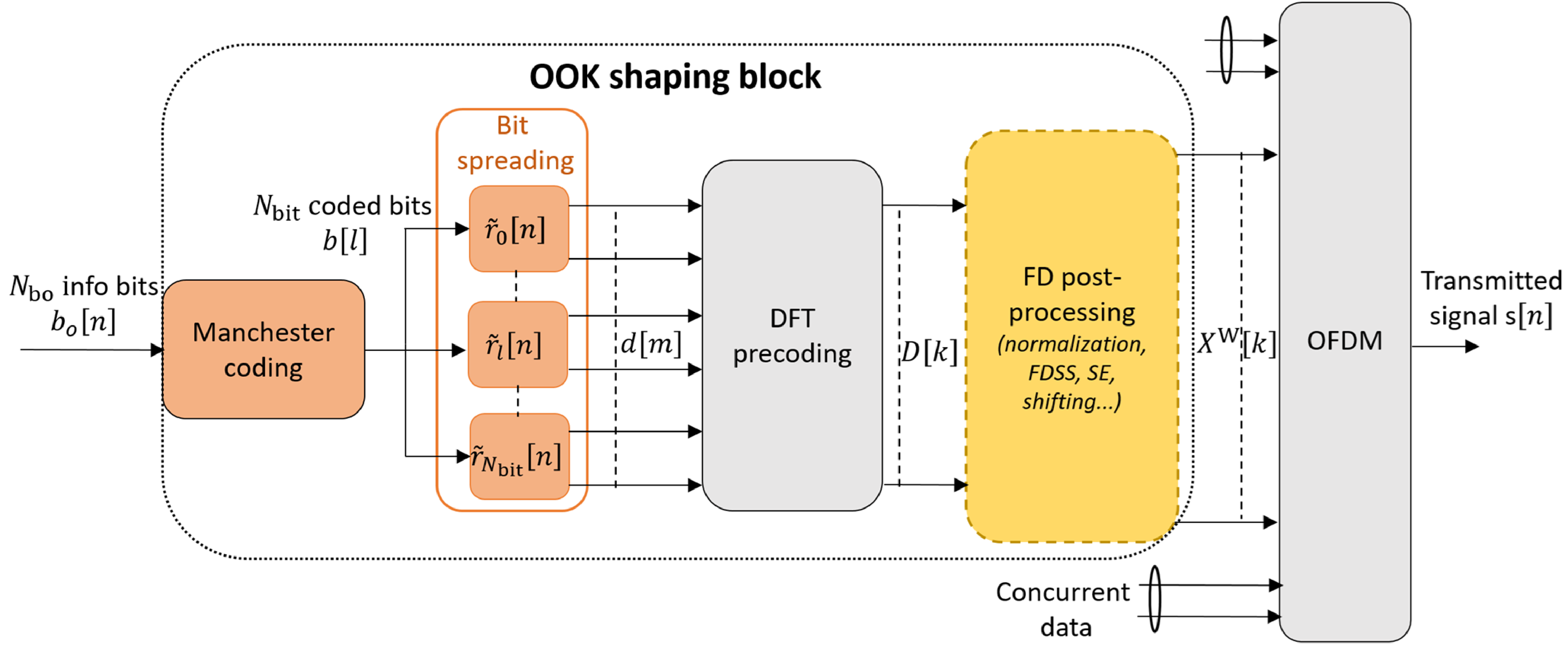} 
	 \vspace{-0.2cm}
	\caption{Illustration of DFT-s-OFDM based OOK modulation}
	\label{fig:Tx}
	 \vspace{-0.2cm}
\end{figure*}

\subsection{Envelope Detection}
The signal arrives at the receiver via a wireless channel composed of $L_p$ paths assumed with 
Rayleigh-faded attenuation $h_p \sim \Ccal\Ncal(0,E_p) $ as
\begin{equation}
y[n] = \sum_{p=0}^{L_p-1} h_p s[n-p] + z[n]  
\end{equation}
where $z[n]$ is a zero-mean complex additive white Gaussian noise (AWGN) with variance $N_0$.
 
We consider a LR as shown in Fig.~\ref{fig:LR}. The received analog signal is first passed through a bandpass filter (BPF) centered around the LP-WUS band to remove adjacent channels; followed by an envelope detector which consists of a norm operator followed by a low-pass filter to smooth the signal,  and then an ADC to finally perform detection.   We will assume a Manchester encoded OOK signal, so detection can be performed by directly comparing the energy between two consecutive OOK symbols.

The receiver is considered to capture the WUS and noise power over the WUS bandwidth, including guard bands, such that the SNR is defined as \mbox{$\mathsf{snr}= \frac{E_h P_W}{(\Nsc +2N_{\rm GB})N_0}$} where $E_h = \sum_{p=0}^{L_p-1}E_p$ is total average channel energy, and $P_W = \expect{|s^W[n]|^2}=  \sum_{k=0}^{\Nsc-1} \expect{ |X^W[k]|^2} $ is the average transmitted power of the WUS signal. 

\begin{figure}[t] 
\centering
\vspace{-0.3cm}
\includegraphics[width=0.49\textwidth]{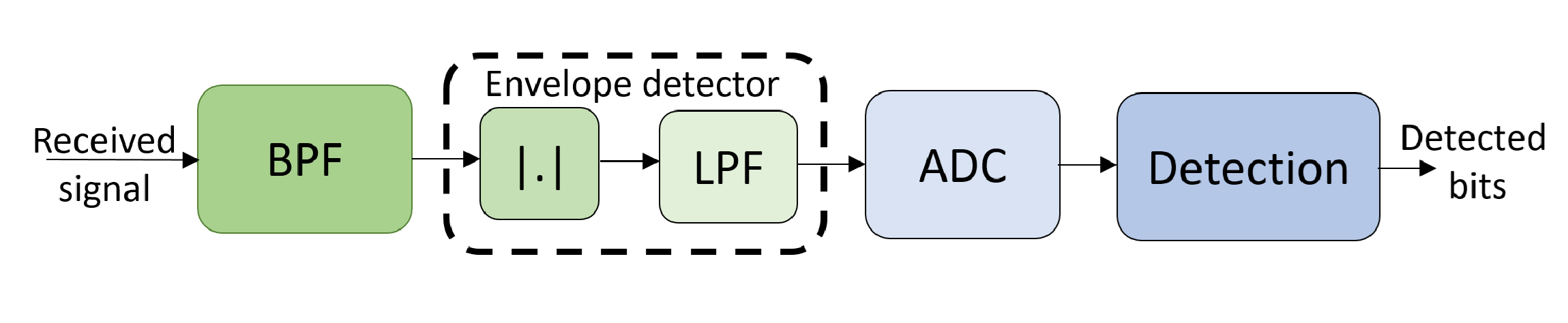}
\vspace{-0.8cm}
	\caption{A low-power wake-up receiver (LR) based on envelope detection}
	\label{fig:LR}
	\vspace{-0.4cm}
\end{figure}

\section{Generation Framework for DFT-s-OFDM-based OOK}
In each OFDM symbol, the LP-WUS OOK signal contains $\Nbit$ OOK symbols (ON or OFF), identified as $\Nbit$ coded bits  being the line coding of $\Nbo$ info bits.   
A generic method to achieve a multi-bit OFDM-OOK for LP-WUS, i.e. $\Nbit>1$ bits per OFDM symbol,  is to obtain the subcarrier coefficients  
by DFT of a string of  $\Nsymb \leq \Nsc $  
 modulation symbols that embeds the targeted ON/OFF pattern, resulting in a DFT-s-OFDM modulation  \emph{only} for $s^W[n]$,  \emph{but not} for the remaining part of the OFDM signal $s^D[n]$. 
As a result, the $\Nsymb$  input symbols then modulates or deactivates a set of selected DFT-s-OFDM time pulses within the OFDM symbol.

\subsection{DFT-s-OFDM-based OOK} 
The considered DFT-s-OFDM based OOK generation framework for LP-WUS is schematized in Fig.~\ref{fig:Tx}.
\subsubsection{Input Bits and Manchester Coding} 
The ON/OFF pattern in the OFDM symbol is controlled by the coded bit string $b[l]\in \{0,1\}$, $l=0,\ldots,\Nbit-1,$ which we will  consider (most of the time) to be  
the Manchester encoding of an original info bits string  $b_o[n]\in \{0,1\}$, $n=0,\ldots,\Nbo-1$  of length $\Nbo= \Nbit/2$.  This Manchester encoding can be formally written as 
\begin{equation} \label{eq:MC}
b_o [n]\to (b[2n],b[2n+1])=( \overline{b_o [n]}, b_o[n]),
\end{equation}
where $\overline{b}= (1 + b) \,{\rm mod}\, 2$ is the NOT (bit-flipping) operation\footnote{Remark that Manchester coding can also be defined as $(b[2n],b[2n+1])=( b_o[n], \overline{b_o [n]})$, but~\eqref{eq:MC} corresponds to a more common implementation obtained  by XOR of the clock signal and the bit string.}.  

\subsubsection{Bit Spreading}
The bits $b[l]$ are spread by an integer factor $\Nseg =\Nsymb/ \Nbit$ to obtain a sequence of symbols $d[m]$ that will be used to modulate  $\Nsymb \leq \Nsc $ DFT-s-OFDM pulses. 
If a bit $b[l]=0$ then it is mapped to an all-zero sequence; otherwise if  $b[l]=1$ it is mapped to a (so-called in 3GPP) overlaid sequence $\tilde{r}_l[n]$, $n=0,\ldots,\Nseg-1$. In general, each bit $b[l]$ can be written as $\Nseg$-time repeated  as $b_{\rm r}[m] =b[l_m ] $ with $l_m = \lfloor \frac{m}{\Nseg} \rfloor$ 
and element-wise multiplied with a concatenated $\Nsymb$-long bit-spreading sequence  $r[m] =  \tilde{r}_{l_m} [m \; {\rm mod  }\; \Nseg ]$. The modulation symbols  are thus
\begin{eqnarray}
d[m]&=& b_{\rm r}[m]r[m] \quad \quad \quad  \quad \text{ for }  m=0,\ldots,\Nsymb-1 \nonumber \\
&=& b[l_m ]  \tilde{r}_{l_m} [m \; {\rm mod  }\; \Nseg ]. \label{eq:d[m]}
\end{eqnarray}

\subsubsection{DFT-precoding}
The modulation symbols are then DFT-precoded to obtain a set of pre-processed subcarrier coefficients  with  indexes $0\leq k \leq \Npl-1$ as \begin{equation} \label{eq:D[k]}
D[k] = \sum_{m=0}^{\Nsymb-1}d[m]  e^{- j \frac{2\pi}{\Nsymb} km}.  
\end{equation}

\subsubsection{FD Post-Processing}  
Typical FD processing of DFT-s-OFDM~\cite{Nokia21} are also considered here. The $\Nsymb$ coefficients $D[k]$ may be cyclically-extended as well as shaped by an FDSS window before mapping to the $\Nsc$ WUS subcarriers, such that for $k =0,\ldots, \Nsc-1$,
\begin{equation} \label{eq:X^W[k]}
X^W[k] = \eta W[k] D[(k+L) \; {\rm mod  }\; \Nsymb]  
\end{equation}
where $(\Nsc-\Nsymb)$ is the spectrum extension (SE), $L$ is an integer shift, $W[k]$ is an FDSS window, and $\eta$ is a normalization factor.  Conventional DFT-s-OFDM   is with $\Nsymb=\Nsc$, $L=0$,  and then $X^W[k] = \eta D[k]$.

SE  enables to control the number of pulses  $\Nsymb \leq \Nsc $  
such that the spreading factor $\Nseg =\Nsymb/ \Nbit $ is an integer. For example, with $\Nbit = 8$ and bandwidth $\Nsc = 12\times 11 = 132$, $\Nbit/\Nsc$ is not an integer; so instead by selecting $\Nsymb = 128$ one gets an integer spreading factor $\Nseg = \Nsymb/\Nbit$. 

FDSS enables shaping of the ON symbols. We consider an FDSS window 
limited to the form $W[k] = e^{-\frac{\jrm 2 \pi }{\Nfft} T_{\rm shift} k} W_R[k]$ where $W_R[k]$ is a real and symmetric FDSS window and $ T_{\rm shift}$ corresponds to a cyclic time shift of the waveform. 	For numerical evaluations, because it concentrates well the energy of DFT-s-OFDM pulses~\cite{Mauritz06},   we will use a Kaiser window defined as $W_R[k] = I_0\left( \beta \sqrt{1-\frac{(k-\gamma)^2}{\gamma^2}}\right)/  I_0\left( \beta \right)$  with shaping parameter $\beta$  and where $\gamma = (\Nsc-1)/2$ and $I_0(\cdot)$ represents the zeroth-order modified Bessel function of the first kind.  With $\beta=0$, $W_R[k]=1$ for all $k$ i.e. there is no shaping; and as the parameter $\beta$ increases, the FD energy concentration increases.

The normalization factor  $\eta$ adjust the LP-WUS power.  
In~\cite{3GPPTR38.869}, LP-WUS power spectral density should take into account the guard band usage to be aligned with the concurrent average symbol power $P_s$, i.e., $\eta$ in~\eqref{eq:X^W[k]} is selected  satisfying  
$\sum_{k=0}^{\Nsc-1} \expect{ |X^W[k]|^2}  =P_s(\Nsc +2N_{\rm GB}) $.

\subsection{TD Pulse Multiplexing Aspect}
The inherent TD multiplexing of DFT-s-OFDM makes it suitable for emulating OOK modulation.  
After inserting \eqref{eq:D[k]} and \eqref{eq:X^W[k]} in \eqref{eq:s^W[n]}, the WUS can be expressed as 
\begin{equation} \label{eq:sW_gm}
s^W[n]= \eta \sum_{m=0}^{\Nsymb-1}d[m]  g_m[n],
\end{equation}
which is the multiplexing of symbols $d[m]$ by $\Nsymb$ pulses 
\begin{equation} \label{eq:gm}
 g_m[n]=  e^{-\jrm \frac{2 \pi L}{\Nsymb}m} h\left[n-\frac{\Nfft}{\Nsymb} m \right],
\end{equation}
all being time-shifted versions of the same kernel filter
\begin{equation}  \label{eq:PulseKernel}
 h[n]=  \sum_{k=0}^{\Nsc-1} W[k] e^{\jrm \frac{2 \pi k}{\Nfft}n }.
\end{equation}
This filter is the IDFT of the FDSS window. Without FDSS, i.e. all $W[k]=1$, this reduces to the classical Dirichlet kernel 
\begin{equation}  \label{eq:Dirichlet}
 h[n]= e^{\jrm \frac{\pi (\Nsc-1) }{\Nfft} n} \frac{\sin\left ( \pi \frac{\Nsc}{\Nfft}n \right)}{\sin\left ( \pi \frac{1}{\Nfft}n \right)}.
\end{equation} 
By inserting~\eqref{eq:d[m]} in \eqref{eq:sW_gm}, the resulting TD bit multiplexing can be written as 
\begin{equation}\label{eq:s[n]Op}
s^W[n] 
= \eta \sum_{l=0}^{\Nbit-1} b[l] O_l[n]
\end{equation}
where 
\begin{equation}
O_l[n] =\!\!\! \sum_{m=l \Nseg}^{(l+1)\Nseg-1} \!\!\!\!\!\!  r[m] g_m[n]
= \sum_{m=0}^{\Nseg-1} \!\!\! \tilde{r}_l[m] g_{m+l\Nseg}[n] \label{eq:Ol[n]}
\end{equation}
is the OOK symbol for the $l$th bit.  Each pulse $g_m [n]$ has most of its energy confined in the sample interval $n\in \left[m\frac{\Nfft}{\Nsymb}  , (m+1)\frac{\Nfft}{\Nsymb}  \right]$ with an energy peak in the middle. Hence, most of the potential combined energy of $O_l[n] $ is in the sample interval 
$n\in \left[\frac{\Nfft}{\Nbit}l,\ldots,\frac{\Nfft}{\Nbit}(l+1)\right]$ suitable for creating an ON symbol in it, while combined energy leakage from these pulses in other sample intervals can contribute to undesired signal level in OFF symbols.    
Fig.~\ref{fig:pulses} illustrates this pulse modulation aspect within one OFDM symbol. Remark in Fig.~\ref{fig:pulses}(a) that the first pulse is always peaking at time zero with its energy spread equally at the beginning and end of the OFDM symbol. This will result in a circular leakage from the first OOK symbol to the last OOK symbol, but  can be attenuated by shifting all pulses to the right using a linear phase in $W[k]$ of  $T_{\rm shift}= \frac{\Nfft}{2\Nsymb}$ as in  Fig.~\ref{fig:pulses}(b). 
Also in Fig.~\ref{fig:pulses}(b), SE is applied, which decreases the number of pulses and spread them apart; and the pulse kernel~\eqref{eq:PulseKernel} is shaped with $\beta=2$ which has the typically effect  to enlarge its main lobe while attenuating it sidelobes. As a result of both SE and FDSS the pulses are non-orthogonal. 
 
\begin{figure}
\vspace{-0.0cm}
\subfigure[$\Nsymb= 12$, plain DFT-s-OFDM \label{fig:pulses1}]{\includegraphics[width=.49\textwidth]{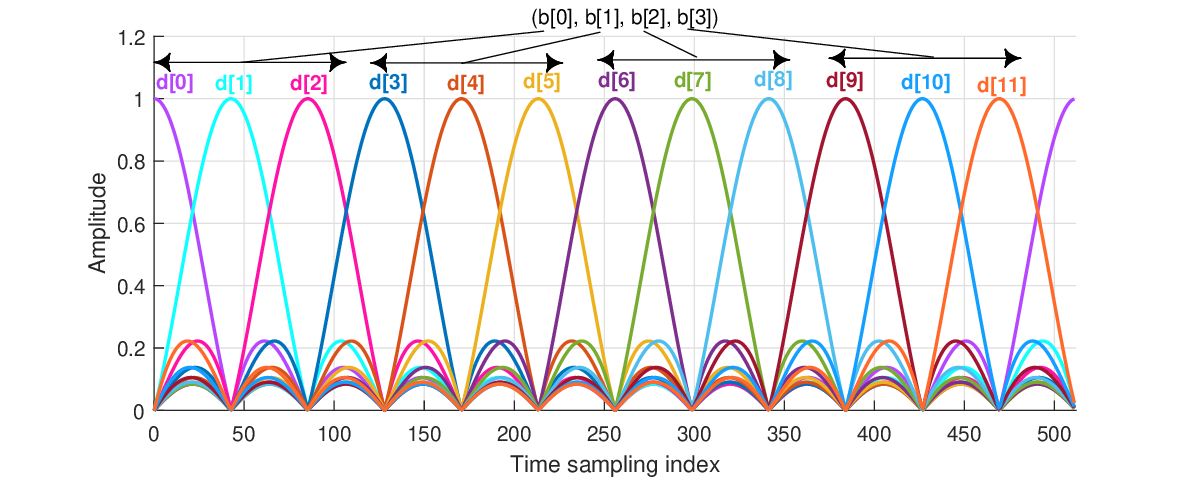}}
\subfigure[$\Nsymb= 9$,  with SE and FDSS ($T_{\rm shift}={ \scriptscriptstyle \frac{\Nfft}{2\Nsymb}}$ and $\beta=2$)]{\includegraphics[width=.49\textwidth]{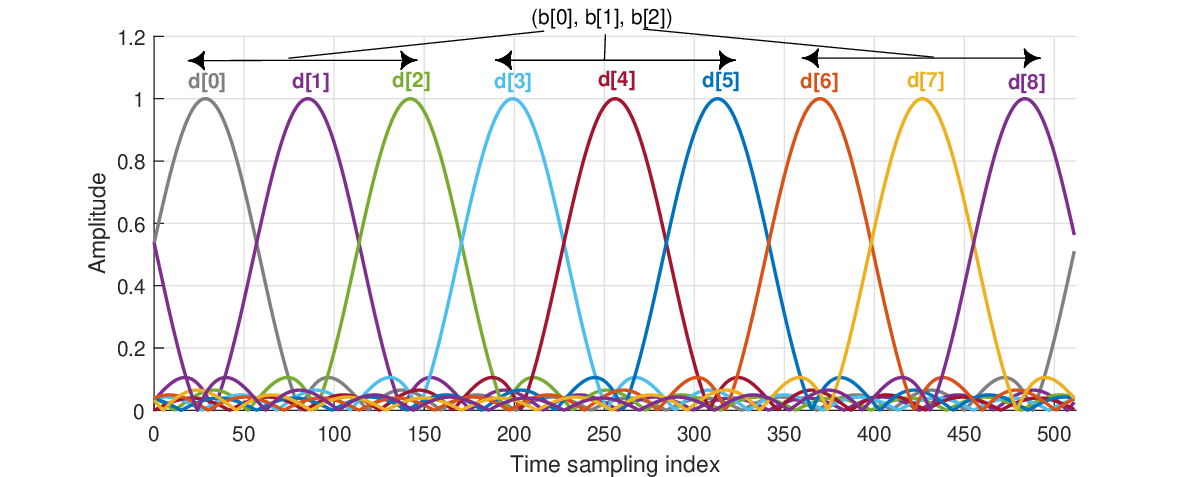}}
	\vspace{-0.4cm}
	\caption{Illustration of DFT-s-OFDM pulses with $\Nsc=12$ \label{fig:pulses}. a) 4 bits spread over $\Nsymb= 12$ orthogonal plain DFT-s-OFDM pulses.  b) 3 bits spread over $\Nsymb= 9\leq\Nsc$ non-orthogonal shaped  DFT-s-OFDM pulses .   }
 \vspace{-0.4cm}
\end{figure}

\subsection{Simplified Subcarrier Coefficients} 
Consider as bit-spreading a common overlaid sequence $r_0[n]$ possibly covered by linear phase ramp as 
\begin{equation} \label{eq:SingleBitSpreading}
r[m]=e^{\jrm \Phi m} r_0[m \; {\rm mod  }\; \Nseg ].
\end{equation}

\begin{figure*}[t] 
\centering
	 \vspace{-0.2cm}
\includegraphics[width=0.8\textwidth]{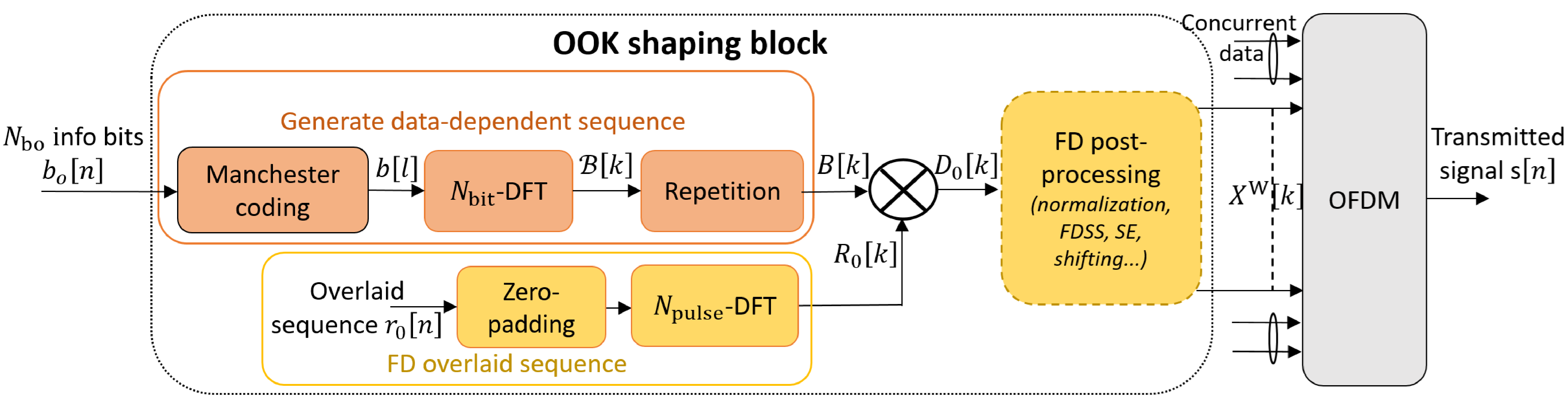} 
 \vspace{-0.1cm}
	\caption{Equivalent FD implementation of DFT-s-OFDM-based OOK in Fig.~\ref{fig:Tx} when bit-spreading is made of common overlaid sequence $\tilde{r}_l[n] =r_0[n]$.}
	\label{fig:Tx_FD}
	 \vspace{-0.3cm}
\end{figure*}

\subsubsection{Two-Layer FD Sequences} 
Then, the subcarrier coefficient computation can be simplified as layered FD sequences. 
\begin{Lem} \label{Lem:D=BRCont}
With bit-spreading~\eqref{eq:SingleBitSpreading},  
the pre-processed subcarrier coefficients \eqref{eq:D[k]} are given by
\begin{equation} \label{eq:D=BR}
D[k] = B\left(k-\frac{\Phi \Nsymb}{2\pi}\right) R_0\left(k-\frac{\Phi \Nsymb}{2\pi}\right)
\end{equation}
where  $B(f)  = \sum_{l=0}^{\Nbit-1} b[l] e^{-\jrm \frac{2 \pi }{\Nbit}f l}$  is the interpolated DFT\footnote{Analogous to the discrete-time Fourier transform (DTFT)} of the bits; and  $R_0(f) = \sum_{m=0}^{\Nseg-1}r_0[m]e^{-\jrm \frac{2\pi}{\Nsymb} fm}$ is the  
interpolated DFT of the overlaid sequence $r_0[n]$  
\end{Lem}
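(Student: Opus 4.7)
The plan is to carry out the computation by exploiting the tensor-product structure that the bit-spreading model~\eqref{eq:d[m]} imposes on $d[m]$ together with the linear-phase factorization of~\eqref{eq:SingleBitSpreading}. Everything reduces to a re-indexing of the sum in~\eqref{eq:D[k]}.

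First, I would substitute~\eqref{eq:SingleBitSpreading} and~\eqref{eq:d[m]} into~\eqref{eq:D[k]} to obtain
\begin{equation}
D[k]=\sum_{m=0}^{\Nsymb-1} b\!\left[\left\lfloor \tfrac{m}{\Nseg}\right\rfloor\right] r_0[m \bmod \Nseg]\, e^{\jrm \Phi m}\, e^{-\jrm \frac{2\pi}{\Nsymb}km}.
\end{equation}
The key re-indexing step is to write $m=l\Nseg+n$ with $l\in\{0,\ldots,\Nbit-1\}$ and $n\in\{0,\ldots,\Nseg-1\}$, so that $\lfloor m/\Nseg\rfloor=l$ and $m\bmod\Nseg=n$. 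The exponential $e^{\jrm\Phi m}=e^{\jrm\Phi l\Nseg}e^{\jrm\Phi n}$ and the DFT kernel $e^{-\jrm\frac{2\pi}{\Nsymb}km}=e^{-\jrm\frac{2\pi}{\Nsymb}k l\Nseg}\,e^{-\jrm\frac{2\pi}{\Nsymb}kn}$ both factor across $l$ and $n$, as does $b[l]\,r_0[n]$. Hence the double sum splits into a product of two independent one-dimensional sums.

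Next, I would simplify the $l$-sum using $\Nsymb=\Nseg\Nbit$, which turns $\frac{2\pi}{\Nsymb}l\Nseg$ into $\frac{2\pi}{\Nbit}l$, and group the two exponentials in $l$ into a single exponential with effective frequency $k-\frac{\Phi\Nsymb}{2\pi}$ (normalized by $\Nbit$); this identifies the factor $B\!\left(k-\frac{\Phi\Nsymb}{2\pi}\right)$. The $n$-sum is already in the form of $R_0$ evaluated at the same shifted frequency $k-\frac{\Phi\Nsymb}{2\pi}$ normalized by $\Nsymb$. Multiplying the two factors yields~\eqref{eq:D=BR}.

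There is essentially no hard step: the argument is bookkeeping of indices. The only point requiring care is the correct bookkeeping of the two different frequency normalizations ($\Nbit$ for $B$ versus $\Nsymb$ for $R_0$) and verifying that the same shift $-\frac{\Phi\Nsymb}{2\pi}$ appears in both; this follows from the identity $\Phi l\Nseg=\frac{2\pi}{\Nbit}\cdot\frac{\Phi\Nsymb}{2\pi}l$, which is where the scaling $\Nsymb/(2\pi)$ in the frequency offset originates.
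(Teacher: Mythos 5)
Your proposal is correct and follows essentially the same route as the paper: the appendix proof likewise re-indexes the sum as $m=l\Nseg+n$ (there phrased via the intermediate general Lemma~\ref{Lem::D[K]general}, $D[k]=\sum_l \tilde R_l[k]\,b[l]e^{-\jrm\frac{2\pi}{\Nbit}kl}$, valid for arbitrary per-bit sequences and DFT size $N_x$), then absorbs the phase ramp $e^{\jrm\Phi(l\Nseg+n)}$ into the two exponentials to get the common frequency shift $-\frac{\Phi\Nsymb}{2\pi}$ and the factorization $B(\cdot)R_0(\cdot)$. The only difference is organizational: the paper's intermediate lemma is reused later (for Cor.~\ref{Corr:XPhaseRamp} and Lem.~\ref{Prop:LS}), whereas you specialize immediately, which is fine for proving this statement alone.
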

The proof is in Appendix~\ref{App:GeneralComputation}. 
Obviously, any cyclic shift $L$ in \eqref{eq:X^W[k]} can be equivalently implemented as a phase ramp with \mbox{$\Phi = -\frac{2\pi L}{\Nsymb}$}; but only some  phase ramp values can be implemented as a cyclic shift. On one hand, conventional DFT-s-OFDM is with $L=0$ and a phase ramp offers extra degrees of freedom for signal designs, but on the other hand 
cyclic shifting may be preferred for implementation  
specially when it corresponds to well-known operation such as e.g. the so-called `fftshift' with $L= -\Nsc/2$; and so we will keep both parameters.  
More specifically, we have

\begin{Cor} \label{Corr:D=BR} 
If $\frac{\Phi \Nsymb}{2\pi}$ is an integer, then $D[k]  =D_0[k-\frac{\Phi \Nsymb}{2\pi}]$ is a cyclic-shifted version of 
\begin{equation} \label{eq:D0=BR}
D_0[k] = B[k] R_0[k]
\end{equation}
where  $B[k]  =   \Bcal[k\; {\rm mod }\; \Nbit]$ is the cyclic extension of the $\Nbit$-point DFT of the bits 
\begin{equation}\label{eq:Bcal[k]} 
\Bcal[k]= \sum_{l=0}^{\Nbit-1} b[l] e^{-\jrm \frac{2 \pi }{\Nbit}k l}  
\end{equation} 
and where the FD overlaid sequence $R_0[k] = \sum_{m=0}^{\Nseg-1}r_0[m]e^{-\jrm \frac{2\pi}{\Nsymb}km}$ 
is the $\Nsymb$-point DFT of the overlaid sequence $r_0[n]$ with zero padding. 
\end{Cor}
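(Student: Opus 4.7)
The plan is to obtain the corollary as a direct specialization of Lemma~\ref{Lem:D=BRCont} under the integrality hypothesis on $\Phi\Nsymb/(2\pi)$, with essentially no new computation needed -- the only substantive work being the identification of the interpolated (DTFT-like) expressions $B(\cdot)$ and $R_0(\cdot)$ with the discrete DFT sequences $B[\cdot]$ and $R_0[\cdot]$ at integer arguments.

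First, set $k' = k - \frac{\Phi\Nsymb}{2\pi}$. By assumption $k'$ is an integer whenever $k$ is, so Lemma~\ref{Lem:D=BRCont} gives $D[k] = B(k')R_0(k') =: D_0[k']$, which is already a cyclic shift of $D_0[k] = B(k)R_0(k)$ by the integer amount $\frac{\Phi\Nsymb}{2\pi}$ (the cyclicity over $k\in\{0,\ldots,\Nsymb-1\}$ is inherited from the modular indexing in~\eqref{eq:X^W[k]}). It then remains to identify $B(k)R_0(k)$, for integer $k$, with $B[k]R_0[k]$ as stated in the corollary.

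For the bit factor, evaluating the interpolated DFT at an integer $k$ yields
\begin{equation*}
B(k) = \sum_{l=0}^{\Nbit-1}b[l]\,e^{-\jrm\frac{2\pi}{\Nbit}kl},
\end{equation*}
which is manifestly $\Nbit$-periodic in $k$ and, for $k\in\{0,\ldots,\Nbit-1\}$, coincides with the $\Nbit$-point DFT $\Bcal[k]$ of the bit string~\eqref{eq:Bcal[k]}. Hence $B(k) = \Bcal[k\ {\rm mod}\ \Nbit] = B[k]$, matching the cyclic extension in the statement. For the overlaid-sequence factor, the same substitution gives
\begin{equation*}
R_0(k) = \sum_{m=0}^{\Nseg-1}r_0[m]\,e^{-\jrm\frac{2\pi}{\Nsymb}km},
\end{equation*}
and since the summation range $m\in\{0,\ldots,\Nseg-1\}$ can be trivially extended to $m\in\{0,\ldots,\Nsymb-1\}$ by zero-padding $r_0$, this is exactly the $\Nsymb$-point DFT $R_0[k]$ defined in the corollary.

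Combining these two identifications yields $D_0[k] = B[k]R_0[k]$ and therefore $D[k] = D_0[k-\frac{\Phi\Nsymb}{2\pi}]$, completing the argument. I expect no real obstacle here: the proof is purely definitional, the only point requiring a little care being the implicit periodization on each side (the $\Nbit$-periodicity that defines $B[k]$ from $\Bcal[k]$, versus the $\Nsymb$-periodicity inherited from the modular indexing in~\eqref{eq:X^W[k]}), which must be made explicit to avoid confusion between the two periods involved.
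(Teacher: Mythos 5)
Your proof is correct and follows essentially the same route as the paper, which treats the corollary as an immediate specialization of Lemma~\ref{Lem:D=BRCont}: evaluate the interpolated DFTs at integer arguments and identify $B(k)$ with the $\Nbit$-periodic extension $\Bcal[k\ {\rm mod}\ \Nbit]$ and $R_0(k)$ with the zero-padded $\Nsymb$-point DFT. The only slight imprecision is attributing the cyclicity to the modular indexing in~\eqref{eq:X^W[k]}; it really comes from the $\Nbit$- and $\Nsymb$-periodicity of the exponentials themselves (with $\Nbit$ dividing $\Nsymb$), but this does not affect the argument.
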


Lem.~1 and Corr.~1  provides a different interpretation of DFT-s-OFDM-based OOK. 
The sequence $R_0[k]$, along with FDSS-SE, provides a common shaping window, constructing an  OOK symbol kernel, for all possible data-dependent sequences $B[k]$. This OOK kernel does not depends of the bits values but still depends on the bit number via $\Nseg$, setting a proper ON symbol duration. The sequence $B[k]$ carries the bit information by superposing multiple time-shifted version of the same OOK symbol kernel. Concretely, with~\eqref{eq:SingleBitSpreading} and if $\Nfft/\Nbit$ is an integer, the OOK symbols as given in~\eqref{eq:Ol[n]} become  a circular shift of the  OOK symbol kernel $O_0[n]$, up to  a phase coefficient, as 
\begin{equation}
O_l[n] = e^{-\jrm \frac{2 \pi l}{\Nbit}\left(L-\frac{\Phi \Nsymb}{2 \pi}\right)} O_0\left[n-l\frac{\Nfft}{\Nbit} \right]. 
\end{equation}
This is illustrated in Fig.~\ref{fig:TwoLayerFDseq} for the bit string  
$(0,1,0,1)$, with $\Nsc=72$, a ZC overlaid sequence, and no post-processing.   
As shown in the figure, the IFFT of $R_0[k]$ creates a single ON symbol while the IFFT of $B[k]$ generates sinc-pulses at different time instants, and the resulting OFDM signal is the convolution of these two signals, up to a scaling.

\begin{figure}[h] 
\centering
\vspace{-0.2cm}
\includegraphics[width=0.49\textwidth]{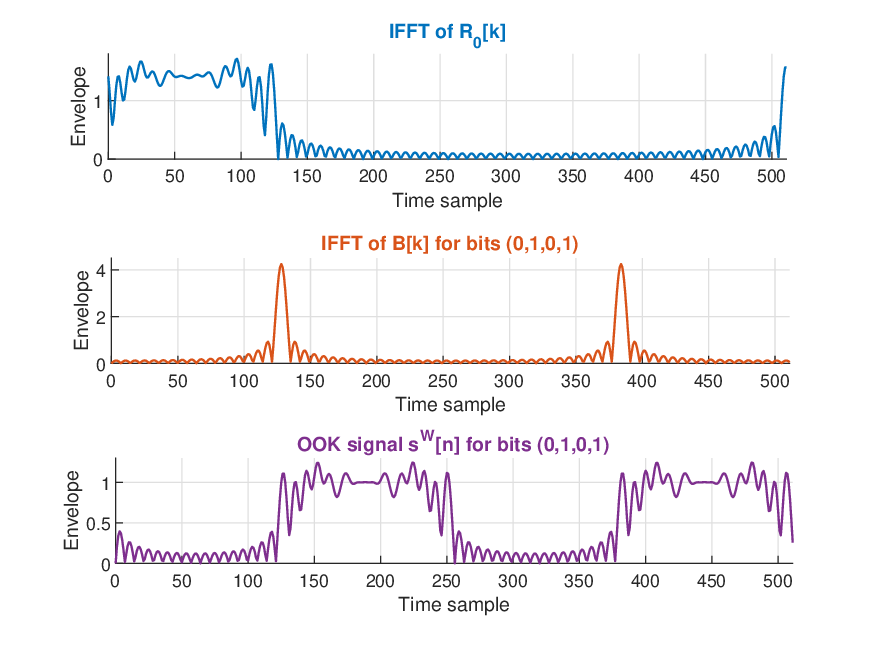}
\vspace{-1cm}
	\caption{Illustration of the layered FD sequences $B[k]$ and $R_0[k]$.}
	\label{fig:TwoLayerFDseq}
	 \vspace{-0.1cm}
\end{figure}

\subsubsection{Implementation Benefits} 
In the implementation of Fig.~\ref{fig:Tx}, the main complexity lies in the  data-dependent $\Nsc$-point DFT of Eq.~\eqref{eq:D[k]}. 
Corr.~\ref{Corr:D=BR} reveals a lower implementation complexity, as illustrated in Fig.~\ref{fig:Tx_FD}: As the sequence $R_0 [k]$ is independent of the bits, it can be computed offline and pre-stored. Also due to the repetitive structure of $B[k]$  and since typically $\Nbit \ll \Nsc$, computing the $\Nbit$-point DFT in Eq.~\eqref{eq:Bcal[k]} is less complex than computing the DFT in Eq.~\eqref{eq:D[k]}. 
Moreover, since the input consists of bits, the DFT in Eq.~\eqref{eq:D[k]}  does not actually require any multiplications. For small $\Nbit$, which is the regime of LP-WUS in NR, the computation of $\Bcal[k]$ can also be made offline and stored in columns of a $\Nbit \times 2^{\Nbit}$ look-up table, and $B[k]$ is then directly given by repetitions of $\Bcal[k]$ for any bandwidth size $\Nsc$.

\subsubsection{DFT of the Bits with Manchester Coding}
Furthermore, Manchester encoding can be absorbed in the computation of $\Bcal[k]$ to further reduce complexity and storage. 

\begin{Lem} \label{Lem::B[K]MC}  
With Manchester encoding, the DFT of the encoded-bits $\Bcal[k]$, for $k=0,\ldots,2\Nbo-1$, in Eq.~\eqref{eq:Bcal[k]}  can be directly computed from the info bits $ b_o[n]$ as
\begin{equation} \label{eq:B[K]MC}
\Bcal[k]= \sum_{n=0}^{\Nbo-1} e^{-\jrm \frac{\pi k}{\Nbo}(2n+ b_o[n])} . 
\end{equation}
\end{Lem}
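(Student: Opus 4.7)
The plan is to start from the definition~\eqref{eq:Bcal[k]} with $\Nbit = 2\Nbo$ and separate the DFT sum into its even-indexed and odd-indexed terms, since the Manchester map~\eqref{eq:MC} treats these two cases differently. Concretely, I would write
\begin{equation*}
\Bcal[k]= \sum_{n=0}^{\Nbo-1} b[2n]\, e^{-\jrm \frac{2\pi k n}{\Nbo}} + \sum_{n=0}^{\Nbo-1} b[2n+1]\, e^{-\jrm \frac{2\pi k n}{\Nbo}}\, e^{-\jrm \frac{\pi k}{\Nbo}},
\end{equation*}
and then substitute $b[2n] = \overline{b_o[n]}$ and $b[2n+1] = b_o[n]$ using~\eqref{eq:MC}.

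Next I would exploit the key observation that $\overline{b_o[n]}$ and $b_o[n]$ are complementary in $\{0,1\}$: for each $n$, exactly one of the two equals $1$ and the other equals $0$. Consequently, the $n$th term contributes either $e^{-\jrm 2\pi k n/\Nbo}$ (when $b_o[n]=0$) or $e^{-\jrm 2\pi k n/\Nbo}\, e^{-\jrm \pi k/\Nbo}$ (when $b_o[n]=1$). Both cases can be written uniformly as $e^{-\jrm \pi k (2n + b_o[n])/\Nbo}$, since the extra phase factor $e^{-\jrm \pi k/\Nbo}$ is selected precisely when $b_o[n]=1$. Summing over $n$ yields~\eqref{eq:B[K]MC}.

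There is no real obstacle here; the proof is essentially a two-line algebraic manipulation, and the only subtlety is noticing that the indicator-like behaviour of complementary bits lets the two split sums be merged into a single sum with the bit value absorbed into the phase. I would keep the exposition compact, presenting the even/odd split first and then the merging step, since any additional detail would be purely mechanical.
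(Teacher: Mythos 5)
Your proposal is correct and follows essentially the same route as the paper's proof: split the DFT into even/odd index pairs, substitute the Manchester map, and merge the two terms using the complementarity of $b_o[n]$ and $\overline{b_o[n]}$ to absorb the bit value into the phase. The paper phrases the merging step as a small algebraic identity rather than a case distinction, but the argument is the same.
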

The proof in Appendix~\ref{App:DFT_MCbits}.  In the case of a small $\Nbo$, the  values of $\Bcal[k]$ can be precomputed and stored in a $2\Nbo \times 2^{\Nbo}$ table such that each specific  information bit string is mapped to  a unique column of this table.    
Specifically, with $ \Nbo = 1 $ and information bit $b_o[0]$, then $B[k]$ reduces to $B[k]=(-1)^{kb_o[0]}$; 
with $ \Nbo = 2$ and information bits $(b_o[0],b_o[1])$, then $B[k]=(-\jrm)^{k b_o[0]}+(-\jrm)^{k(2+b_o[1])}$. So for $ \Nbo = 1 $ or $ \Nbo = 2 $, the sequences $B[k]$ are the repetition of the 2 or 4 coefficients of $\Bcal[k]$ in Table~\ref{tab:12bit}.

\begin{table}[t] \centering
	\vspace{-0.2cm}
	\caption{Coefficients $\Bcal[k]$ for Manchester-encoding of one or two information bits}
	\vspace{-0.2cm}
	\label{tab:12bit}
	\begin{tabular}{|c|c|c|}
	 \hline
	\multicolumn{3}{|c|}{$\Nbo = 1$} \\
 Info bit $b_o [0]$:        & 0 &1 \\ \hline
$\Bcal[0]$ & 1 & 1 \\ 
 $\Bcal[1]$ & 1 & -1 \\ 
 \hline
\end{tabular}
\hspace{1cm}
	\begin{tabular}{|c|c|c|c|c|}
	 \hline
	\multicolumn{5}{|c|}{$\Nbo = 2$} \\
 Info bit $(b_o [0],b_o [1])$ :     & (0,0) & (0,1)& (1,0)& (1,1) \\ \hline
$\Bcal[0]$ & $2$ & $2$ & $2$&$2$\\ 
 $\Bcal[1]$ & $0$ & $1+\jrm$ & $-1-\jrm$& $0$ \\ 
$\Bcal[2]$ & $2$& $0$ & $0$& $-2$\\ 
 $\Bcal[3]$ & $0$ & $1-\jrm$&  $-1+\jrm$& $0$ \\ 
 \hline
\end{tabular}
\vspace{-0.2cm}
\end{table}

\section{Rectangular-like OOK} 

The intuitive target for designing OOK signals is to create well-formed rectangular waveforms, as initially considered in the 3GPP  LP-WUS study~\cite{3GPPTR38.869} using either the LS-precoding method of~\cite{Mazloum20} or  a  DFT-s-OFDM-based implementation.

\subsection{Flattening ON Symbols in DFT-s-OFDM-based OOK} \label{sec:FlatOOK}
In order to limit envelope fluctuations within ON symbols, one should minimize the phase difference of two crossing neighboring pulses in~\eqref{eq:Ol[n]} to control their combining. This resembles low-PAPR  methods such as $\pi/2-$BPSK constellation for DFT-s-OFDM~\cite{KimTVT18}. Following this line of thinking, we consider a linear  phase ramp as bit-spreading sequence:    
$r[m]= e^{\jrm \Phi m}$. 
Then, as detailed in Appendix~\ref{App:Flat}, the coherent combining of two consecutive pulses can be minimized by selecting   
\begin{equation} \label{eq:PhiFlat1}
\Phi = \frac{\pi (2L+\Nsc-1)}{\Nsymb}.
\end{equation}

This result follows from the analysis of only two overlapping neighboring pulses, and therefore is only an approximation:
First, other neighboring pulses also contribute  to the envelope fluctuation. Second, if $\Nsymb\ll \Nsc$, the neighboring pulses may be so spread apart  
that their main lobes no longer overlap\footnote{In the case of no FDSS, the main lobe width of the kernel~\eqref{eq:Dirichlet}
is $2 \frac{\Nfft}{\Nsc}$ while pulses are shifted by $\frac{\Nfft}{\Nsymb}$ in~\eqref{eq:gm}; therefore if $\Nsymb\leq \Nsc/2$ the main lobes do not overlap anymore.}. 
When shaping is applied via FDSS, these discrepancies diminish as  the sidelobes are attenuated and the main lobes broaden.

In the case of conventional DFT-s-OFDM with 
 $\Nsymb=\Nsc$ and $L=0$, \eqref{eq:PhiFlat1} becomes $ \Phi  = \frac{\pi (\Nsc-1)}{\Nsc} 	\approx \pi$  and the bit-spreading sequence reduces to a $\pm 1$ alternation as $r[n]\approx (-1)^n$.   
Fig.~\ref{fig:PhaseRampImpact} shows the effect of the different  linear phases $\Phi= 0, \pi/4, \pi/2, \pi$, without and with FDSS. The waveform difference between~\eqref{eq:PhiFlat1} (exactly here $\Phi  =  0.97 \pi $)  and $\Phi= \pi$ is unnoticeable and thus not shown.  Using  $\Phi= 0$ results in the worst-case combining for consecutive pulses, leading to significant signal fluctuations and poor OOK signal properties. Without FDSS, a small phase ramp with $\Phi = \pi/4$ already substantially improves the signal shape in both the ON and OFF symbols.  
With FDSS, a larger phase ramp is needed to  further flatten the envelope of ON symbol, albeit at the cost of slower ON/OFF transitions.

\begin{figure}
\vspace{-0.2cm}
\subfigure[Without FDSS \label{fig:PhaseRampImpacta}]{\includegraphics[width=.49\textwidth]{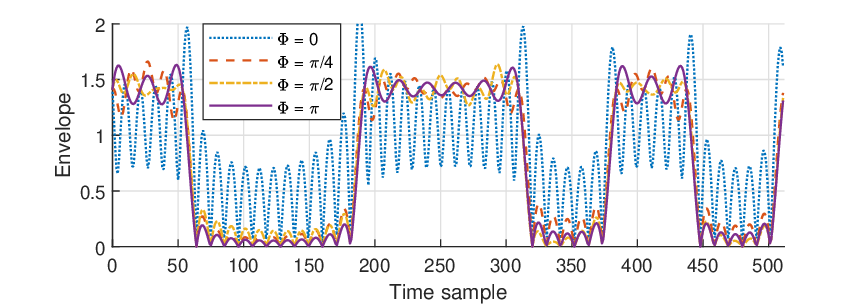}}
\subfigure[With FDSS  ($\beta =4$) \label{fig:PhaseRampImpactb}]{\includegraphics[width=.49\textwidth]{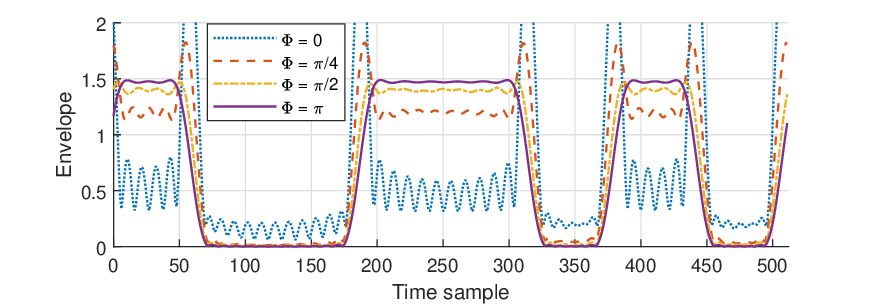}}
	\vspace{-0.4cm}
	\caption{OOK waveform flattening effect with increasing angle in linear phase ramp of bit-spreading sequence. Here 8-bit string $[1 0 0 1 1 0 1 0]$ is transmitted with $\Nsc=\Nsymb = 48$, and $L=0$.\label{fig:PhaseRampImpact}}
 \vspace{-0.4cm}
\end{figure}

With a phase ramp as bit-spreading, the WUS subcarrier coefficients~\eqref{eq:X^W[k]} can be further simplified as follows (see Appendix~\ref{App:GeneralComputation}). 
\begin{Cor} \label{Corr:XPhaseRamp}
With $r[m]= e^{\jrm \Phi m}$ for all $m$, the WUS subcarrier coefficients are given by 
\begin{equation} 
\!\! X^{W}[k] = \eta W[k] R_0\left(k-\textstyle \frac{\Phi\Nsymb }{2 \pi}+L\right) 
B\left(k-\textstyle \frac{\Phi\Nsymb }{2 \pi}+L\right)   
\end{equation}
where 
$R_0(0)= \Nsymb/\Nbit$ and otherwise 
\begin{equation} \label{eq:R0[k]RecOOK}
R_0(f) = \alpha e^{-\jrm\pi f (\frac1{\Nbit} -\frac1{\Nsymb})}
\frac{\sin \left(\frac{\pi}{\Nbit} f\right)}{\sin \left(\frac{\pi}{\Nsymb} f\right)}
\end{equation}
in which\footnote{The global phase $\alpha$ has no impact on the signal envelope and can be disregarded.} 
$  \alpha= e^{\jrm \pi (\frac{1}{\Nbit}- \frac{1}{\Nsymb})\left( \frac{\Phi \Nsymb}{2\pi}- L  \right) } $.
\end{Cor}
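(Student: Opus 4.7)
The plan is to reduce Corollary 3 to Lemma 1 applied with the degenerate overlaid sequence $r_0[m] = 1$ (so that the bit spreading in \eqref{eq:SingleBitSpreading} collapses to a pure phase ramp), and then simply evaluate the resulting DFT of the all-ones sequence in closed form using a geometric-series identity. Specifically, with $r[m] = e^{\jrm \Phi m}$, the overlaid sequence from \eqref{eq:SingleBitSpreading} is $r_0[m] = 1$ for $m = 0, \ldots, \Nseg - 1$. Substituting into Lemma 1 gives $D[k] = B(k - \frac{\Phi \Nsymb}{2\pi}) R_0(k - \frac{\Phi \Nsymb}{2\pi})$, with $R_0(f) = \sum_{m=0}^{\Nseg-1} e^{-\jrm \frac{2\pi}{\Nsymb} f m}$ being the interpolated DFT of the all-ones sequence.

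The next step would be to insert this $D[k]$ into \eqref{eq:X^W[k]}, producing $X^W[k] = \eta W[k] D[(k + L) \mod \Nsymb]$. Since $B(f)$ and $R_0(f)$ are periodic in $f$ with periods $\Nbit$ and $\Nsymb$ respectively, and $\Nsymb$ is an integer multiple of $\Nbit$, the cyclic shift modulo $\Nsymb$ can be absorbed as a simple argument translation $k \to k + L$, yielding the advertised identity for $X^W[k]$.

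It then remains to evaluate $R_0(f)$ in closed form. The geometric sum gives, for $f$ not an integer multiple of $\Nsymb$,
\begin{equation}
R_0(f) = \frac{1 - e^{-\jrm 2\pi f / \Nbit}}{1 - e^{-\jrm 2\pi f / \Nsymb}},
\end{equation}
after using $\Nseg = \Nsymb / \Nbit$ so that $e^{-\jrm 2\pi f \Nseg / \Nsymb} = e^{-\jrm 2\pi f / \Nbit}$. Then I would apply the standard half-angle factorization $1 - e^{-\jrm 2\theta} = 2\jrm \, e^{-\jrm \theta} \sin \theta$ to both numerator and denominator, which produces exactly the $e^{-\jrm \pi f (1/\Nbit - 1/\Nsymb)}\,\sin(\pi f/\Nbit)/\sin(\pi f/\Nsymb)$ structure. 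The $f = 0$ case (where both sines vanish) must be handled as a separate definition by L'Hôpital or by direct evaluation of the sum, giving $R_0(0) = \Nseg = \Nsymb/\Nbit$.

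The only real bookkeeping obstacle is the global phase $\alpha$: my natural derivation gives $R_0(f)$ without any $\alpha$ factor, so the $\alpha$ appearing in the statement must arise from splitting the exponent $-\jrm\pi(k - \frac{\Phi\Nsymb}{2\pi} + L)(1/\Nbit - 1/\Nsymb)$ into a $k$-dependent part and a constant part, and absorbing the constant part (which equals $\alpha$) into the definition of $R_0$ so that the $k$-dependent exponential retains the clean form $e^{-\jrm\pi f (1/\Nbit - 1/\Nsymb)}$. As the authors note, this $\alpha$ is a global phase with no impact on the envelope and plays no role in any subsequent signal-design argument, so this step is essentially notational. No other subtlety arises; the core of the proof is a one-line geometric sum once Lemma 1 is invoked.
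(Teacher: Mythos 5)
Your proposal is correct and follows essentially the same route as the paper: Appendix~A derives Cor.~3 from the general factorization $D[k]=R_0\bigl(k-\tfrac{\Phi\Nsymb}{2\pi}\bigr)B\bigl(k-\tfrac{\Phi\Nsymb}{2\pi}\bigr)$ (i.e.\ Lem.~1) with the all-ones overlaid sequence $r_0[m]=1$, evaluates $R_0(f)$ by the exponential-sum formula to get $e^{-\jrm\pi f(\Nseg-1)/\Nsymb}\sin(\pi f/\Nbit)/\sin(\pi f/\Nsymb)$, which is exactly your geometric-sum/half-angle computation since $(\Nseg-1)/\Nsymb=\tfrac1{\Nbit}-\tfrac1{\Nsymb}$, and then applies the shift $k\to k+L$ using periodicity. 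Your reading of $\alpha$ as the constant part of the phase split off when the argument $k-\tfrac{\Phi\Nsymb}{2\pi}+L$ is expressed in $k$, disregardable as a global phase, matches the paper's own footnote.
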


\subsection{Rectangular Zero-DC OOK}
The DC subcarrier is set to zero in the original IEEE 802.11  Wi-Fi signal format to avoid performance degradation due to DC-offset errors. 
Similarly, the DC subcarrier is nulled in 3GPP LTE standard but not necessarily in NR.  
A zero-DC OOK was first considered for IEEE 802.11ba WUS~\cite{LopezICC19,SahinGC18,Caballe19} 
but this constraint was eventually  relaxed to facilitate the generation of smoother ON/OFF symbols~\cite{LopezWifi18,Caballe19}. 

Using Lem.~\ref{Lem:D=BRCont}, we observe that this zero-DC constraint can be met if $R_0(k-\frac{\Phi \Nsymb}{2\pi})$ can be nulled at the middle subcarrier index. 
For rectangular OOK with $R_0(f)$ as given in~\eqref{eq:R0[k]RecOOK}, it is a sinc-like function whose null positions can be controlled by the angle of the phase ramp in the bit-spreading sequence. 
In general, a null at index $k_{\rm null}$ can be achieved by setting $\Phi$ satisfying $R_0 (k_{\rm null} -\frac{\Phi \Nsymb}{2\pi})=0$. With $R_0 (f)$ in~\eqref{eq:R0[k]RecOOK}, this is achieved by satisfying $\frac{\pi}{\Nbit} \left( k_{\rm null} -\frac{\Phi \Nsymb}{2 \pi} \right) =\lambda \pi$ with integer $\lambda \neq 0$, 
giving
\begin{equation}
\Phi = \frac{2 \pi}{\Nsymb} (k_{\rm null} - \lambda  \Nbit ).   
\end{equation}

\begin{figure}[t]
\vspace{-0.2cm}  
\subfigure[Example of particular signal envelope. \label{fig:nullDC_envelope}]{\includegraphics[width=.49\textwidth]{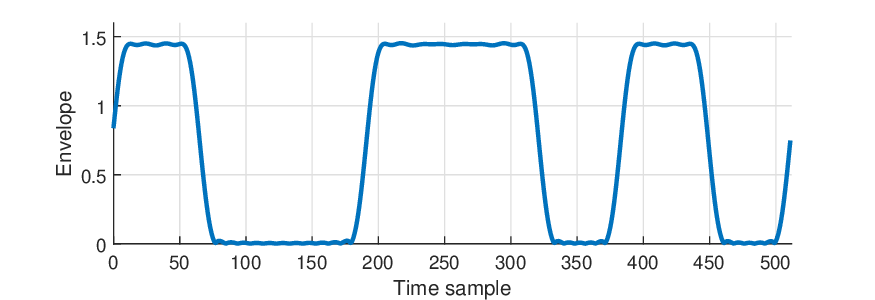}}
\subfigure[Average power of subcarrier coefficients for Manchester encoded 8-bit string  with alternating sign in the ramp phase. \label{fig:nullDC_spectrum}]{\includegraphics[width=.49\textwidth]{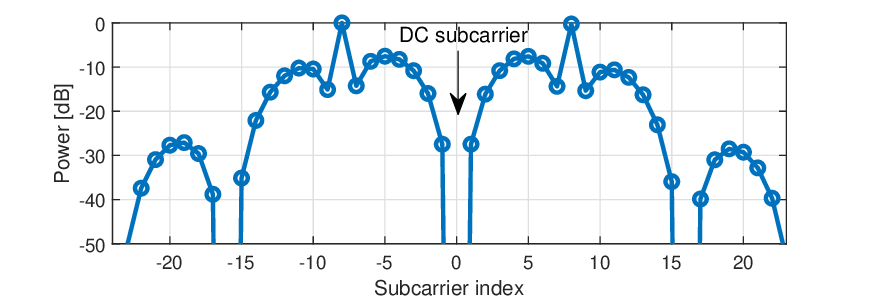}}
	\vspace{-0.4cm}
	\caption{Illustration of rectangular zero-DC OOK $\Nsc= 48$ using $\Phi=\pm \frac{2}{3}\pi$.  
	\label{fig:nullDC}}
	\vspace{-0.4cm}
	\end{figure}

Assuming $\Nsc=\Nsymb$ with DC subcarrier index $k_{\rm null}=\Nsc/2$,  
and taking $\lambda = 1$, the DC subcarrier can always be set to zero using $\Phi = \pi\left(1-\frac{2 \Nbit}{\Nsc} \right)$. If $\Nbit$ is small compared  to $\Nsc$, e.g., $\Nbit \leq \Nsc/4 $, then as discussed before, $\Phi$ would be large enough for  the OOK signal to retain a good rectangular shape. 

Due to the sinc-like function $R_0(f)$, the power distribution of the subcarriers coefficients will be concentrated  in a dominant lobe whose position is also shifted by the phase ramp. To ensure that the LP-WUS average spectrum is symmetrically distributed around the DC, different values of $\lambda$  can be selected, or the sign of $\Phi$ can be alternated in each OFDM symbol. 
Fig.~\ref{fig:nullDC} illustrates the resulting DC nulling effect with $\Nsymb= \Nsc= 48$, $\Nbit= 8 $, $k_{\rm null}=24$, and using $\Phi=\pm \frac{2}{3}\pi$ with alternating sign. As shown, the  middle subcarrier has been canceled, while the OOK waveform retains a rectangular shape since $\Phi = \frac{2}{3} \pi$ is close enough to $\pi$.

\subsection{Least-Squares (LS) Waveform}
In 3GPP LP-WUS and A-IoT studies,  
one of the main approaches was from~\cite{Mazloum20} which provides the LS approximation to an ideal rectangular OOK signal.  The corresponding subcarrier coefficients are obtained using a so-called ``LS precoding'' in~\cite{3GPPTR38.869}  
which is essentially a larger DFT of size matching the OFDM IFFT,  followed by a truncation to match the subcarrier allocation, see Appendix~\ref{App:LS} for further details.   
 
The LS  subcarrier coefficients obtained in~\cite{Mazloum20} can be written for $k=0,\ldots,\Nsc-1$  as
\begin{equation} \label{eq:XLS}
X^{W,\,{\rm LS}}[k] = \eta D^{\rm LS} \left[\left( k-\left\lfloor \frac{\Nsc}{2} \right\rfloor \right) {\rm mod} \, \Nfft \right]   
\end{equation}
where for $ k'=0,\ldots,\Nfft-1$
\begin{equation}  \label{eq:DFTLS}
D^{\rm LS}[k']=  \sum_{n=0}^{\Nfft-1} b_{\rm r}^{\rm LS} [n] e^{- j \frac{2\pi}{\Nfft}  n k'}  
\end{equation}
is a $\Nfft$-point DFT  of the spread-bit sequence $b_{\rm r}^{\rm LS}$ obtained by repeating the bit string $b[m]$ of length $\Nbit$ with spreading factor $\Nseg^{\rm LS}=\Nfft/\Nbit$ (assumed being an integer).  Since the perfectly-rectangular OOK signal $b_{\rm r}^{\rm LS}[k]$  is real, its spectrum is symmetrically concentrated around the zero frequency. Therefore the LS approximation to $b_{\rm r}^{\rm LS}[k]$ with constrained bandwidth %
is given by~\eqref{eq:XLS}  which preserves the dominant subcarrier coefficients around the zero frequency.  

The LS subcarrier coefficients~\eqref{eq:XLS} can be further simplified (see Appendix.~\ref{App:GeneralComputation}).  
\begin{Lem} \label{Prop:LS}
The LS coefficients \eqref{eq:XLS} are equivalently given by
\begin{equation} \label{eq:XLS2}
X^{W,\,{\rm LS}}[k] = \eta W^{{\rm LS}}[k] B\left[ k-\left\lfloor \frac{\Nsc}{2} \right\rfloor \right] 
\end{equation}
where  
$W^{\rm LS}[k]  = \frac{\Nfft}{\Nbit}$ for the middle subcarrier $k=\left\lfloor \frac{\Nsc}{2} \right\rfloor$, otherwise for other subcarrier indices $k\neq \left\lfloor \frac{\Nsc}{2} \right\rfloor$ 
\begin{equation}
\!\!\!\! W^{{\rm LS}}[k] = \alpha^{\rm LS} e^{-\jrm \pi k \left(\frac1{\Nbit} -\frac1{\Nfft}\right)}
\frac{\sin \left(\frac{\pi}{\Nbit} \left(\left\lfloor \frac{\Nsc}{2} \right\rfloor -k \right)\right)}{\sin \left(\frac{\pi}{\Nfft} \left(\left\lfloor \frac{\Nsc}{2} \right\rfloor -k \right)\right)}
\end{equation}
in which  $  \alpha^{\rm LS}= e^{\jrm \pi \left\lfloor \frac{\Nsc}{2} \right\rfloor \left(\frac{1}{\Nbit}- \frac{1}{\Nfft} \right)} $.  
\end{Lem}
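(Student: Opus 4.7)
The plan is to start from the defining expression~\eqref{eq:XLS}--\eqref{eq:DFTLS} and exploit the fact that $b_{\rm r}^{\rm LS}[n]$ is, by construction, a $\Nseg^{\rm LS}$-fold repetition of the $\Nbit$ bits $b[l]$, where $\Nseg^{\rm LS}=\Nfft/\Nbit$. The key idea is the classical factorization of a DFT of a repeated sequence into a ``spectrum-of-the-original-bits'' factor times a ``Dirichlet-like repetition kernel'' factor; after this factorization, matching terms with those in the statement should be immediate.

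Concretely, I would first split the summation index in~\eqref{eq:DFTLS} as $n=l\Nseg^{\rm LS}+m$, with $l=0,\ldots,\Nbit-1$ and $m=0,\ldots,\Nseg^{\rm LS}-1$, and factor
\begin{equation}
D^{\rm LS}[k']=\left(\sum_{l=0}^{\Nbit-1}b[l]\,e^{-\jrm\frac{2\pi}{\Nbit}lk'}\right)\left(\sum_{m=0}^{\Nseg^{\rm LS}-1}e^{-\jrm\frac{2\pi}{\Nfft}mk'}\right).
\end{equation}
The first factor depends on $k'$ only through $k'\bmod\Nbit$, so it equals $\Bcal[k'\bmod\Nbit]=B[k']$ as defined in~\eqref{eq:Bcal[k]}. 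The second factor is a geometric sum that I would evaluate in closed form: for $k'\not\equiv 0\pmod{\Nfft}$,
\begin{equation}
\sum_{m=0}^{\Nseg^{\rm LS}-1}e^{-\jrm\frac{2\pi}{\Nfft}mk'}=e^{-\jrm\pi k'(\tfrac{1}{\Nbit}-\tfrac{1}{\Nfft})}\frac{\sin(\pi k'/\Nbit)}{\sin(\pi k'/\Nfft)},
\end{equation}
while for $k'=0$ the sum collapses to $\Nseg^{\rm LS}=\Nfft/\Nbit$. Call this second factor $\Theta[k']$.

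Then I would substitute $k'=(k-\lfloor \Nsc/2\rfloor)\bmod\Nfft$ into the product $D^{\rm LS}[k']=B[k']\Theta[k']$. Since $\Nfft$ is a multiple of $\Nbit$, the sequence $B[k']$ is itself $\Nfft$-periodic, and so $B\bigl[(k-\lfloor \Nsc/2\rfloor)\bmod\Nfft\bigr]=B\bigl[k-\lfloor \Nsc/2\rfloor\bigr]$; this yields the $B$-factor in~\eqref{eq:XLS2}. For the other factor, I would set $W^{\rm LS}[k]\triangleq\Theta\bigl[(k-\lfloor \Nsc/2\rfloor)\bmod\Nfft\bigr]$, pull the $k$-dependent part out of the phase exponential as $e^{-\jrm\pi k(1/\Nbit-1/\Nfft)}$, and absorb the remaining $k$-independent part into the constant $\alpha^{\rm LS}=e^{\jrm\pi\lfloor \Nsc/2\rfloor(1/\Nbit-1/\Nfft)}$. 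The two minus signs that arise from $\sin(\pi(k-\lfloor\Nsc/2\rfloor)/\cdot)=-\sin(\pi(\lfloor\Nsc/2\rfloor-k)/\cdot)$ cancel between the numerator and denominator of the Dirichlet ratio, delivering exactly the expression stated in the lemma. The $k=\lfloor \Nsc/2\rfloor$ case (i.e.\ $k'=0$) is handled by the separate value $\Nfft/\Nbit$ computed above.

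There is no real obstacle here: the argument is a routine manipulation of a DFT of a repeated vector. The only place one must be careful is the reduction modulo $\Nfft$ versus modulo $\Nbit$, which relies on $\Nfft$ being an integer multiple of $\Nbit$ (already assumed in the definition of $\Nseg^{\rm LS}$) so that the $B[\cdot]$ factor can be taken without the modulo-$\Nfft$ wrap. Everything else is bookkeeping of the global phase $\alpha^{\rm LS}$ and of the signs inside the sinusoids, which has no impact on the envelope and can be verified by direct substitution.
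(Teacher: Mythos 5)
Your proposal is correct and follows essentially the same route as the paper: the paper proves this by specializing its general factorization (Lem.~\ref{Lem::D[K]general} and~\eqref{eq:D=BR_Nx}) with $N_x=\Nfft$, $r_0[m]=1$, $\Phi=0$, evaluating the all-ones DFT by the exponential sum formula, and then applying the shift and truncation of~\eqref{eq:XLS}, which is exactly your direct block-split of the $\Nfft$-point DFT into the bit DFT times the Dirichlet-type geometric sum followed by the same phase/sign bookkeeping.
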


By identification with Corr.~\ref{Corr:XPhaseRamp}, it follows 
 
\begin{Cor} \label{Cor:LSfromDFT}
The LS waveform from~\eqref{eq:XLS} can equivalently be generated by DFT-s-OFDM modulation as in~\eqref{eq:X^W[k]} with any $\Nsymb$ 
satisfying $\Nbit\leq \Nsymb \leq \Nsc$ and  $\Nsymb/\Nbit$ being an integer, if the bit-spreading sequence is a linear phase ramp satisfying $\Phi = \frac{2\pi\left(L+\left\lfloor \frac{\Nsc}{2} \right\rfloor \right)}{\Nsymb}$, and  the FDSS window is given for $k\neq \left\lfloor \frac{\Nsc}{2} \right\rfloor$ by   
\begin{equation} \label{eq:W[k]LS}
W[k] =  
e^{-\jrm \frac{2\pi k}{\Nfft} T_{\rm shift}}  
\frac{\alpha^{\rm LS}}{\alpha}
\frac{\sin \left(\frac{\pi}{\Nsymb}\left(\left\lfloor \frac{\Nsc}{2} \right\rfloor-k\right)\right)}{\sin \left(\frac{\pi}{\Nfft}\left(\left\lfloor \frac{\Nsc}{2} \right\rfloor-k\right)\right)}
\end{equation}
with  
$T_{\rm shift} = \frac{\Nfft-\Nsymb}{2\Nsymb}$, and $W[k]=\frac{\Nfft}{\Nsymb}$ for $k=\left\lfloor \frac{\Nsc}{2} \right\rfloor$.
\end{Cor}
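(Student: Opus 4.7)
The plan is to prove Corollary~\ref{Cor:LSfromDFT} by directly identifying the two expressions for the subcarrier coefficients: the LS form given in Lem.~\ref{Prop:LS} and the phase-ramp DFT-s-OFDM form given in Corr.~\ref{Corr:XPhaseRamp}. Both are products of a bit-independent window and a data-dependent term involving $B(\cdot)$, so matching them reduces to (i) choosing $\Phi$ so that the argument of $B$ agrees, and (ii) taking $W[k]$ to be the ratio of the two windows.

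First I would align the $B$ arguments. Lem.~\ref{Prop:LS} evaluates the bits at the integer shift $k-\lfloor \Nsc/2\rfloor$, while Corr.~\ref{Corr:XPhaseRamp} evaluates the interpolated DFT at $k-\tfrac{\Phi\Nsymb}{2\pi}+L$. Setting these arguments equal forces $\Phi=\frac{2\pi(L+\lfloor\Nsc/2\rfloor)}{\Nsymb}$, which is the claimed value. Since $L$ and $\lfloor\Nsc/2\rfloor$ are integers and $\Nsymb/\Nbit$ is integer by hypothesis, the argument is integer-valued at each integer $k$, and the interpolated $B(\cdot)$ collapses to the $\Nbit$-periodic cyclic extension $B[\,\cdot\,]$ from \eqref{eq:Bcal[k]}; this is precisely the factor appearing in Lem.~\ref{Prop:LS}.

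Next I would identify the FDSS window through the forced equality
\[
W[k]\,R_0(k-\lfloor\Nsc/2\rfloor)=W^{\rm LS}[k],
\]
so $W[k]=W^{\rm LS}[k]/R_0(k-\lfloor\Nsc/2\rfloor)$. Substituting the closed forms of $R_0$ from \eqref{eq:R0[k]RecOOK} and of $W^{\rm LS}$ from Lem.~\ref{Prop:LS}, the $\sin(\pi(\cdot)/\Nbit)$ numerator of $W^{\rm LS}$ cancels the $\sin(\pi(\cdot)/\Nbit)$ numerator of $R_0$ (up to a sign from $\sin$-oddness), leaving exactly the ratio $\sin(\tfrac{\pi}{\Nsymb}(\lfloor\Nsc/2\rfloor-k))/\sin(\tfrac{\pi}{\Nfft}(\lfloor\Nsc/2\rfloor-k))$ displayed in \eqref{eq:W[k]LS}. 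The exponential prefactors combine into a $k$-linear phase $\exp[\jrm\pi k(\tfrac{1}{\Nfft}-\tfrac{1}{\Nsymb})]$, which after writing $\tfrac{1}{\Nsymb}-\tfrac{1}{\Nfft}=\tfrac{\Nfft-\Nsymb}{\Nfft\Nsymb}$ is recognized as $\exp(-\jrm\tfrac{2\pi k}{\Nfft}T_{\rm shift})$ with $T_{\rm shift}=\frac{\Nfft-\Nsymb}{2\Nsymb}$, confirming the asserted cyclic time shift. The remaining $k$-independent phase collapses to $\alpha^{\rm LS}/\alpha$. Finally, the middle subcarrier is a $0/0$ limit: taking the limit $k\to\lfloor\Nsc/2\rfloor$ in the sinc ratio yields $\Nfft/\Nsymb$, while $W^{\rm LS}[\lfloor\Nsc/2\rfloor]/R_0(0)=(\Nfft/\Nbit)/(\Nsymb/\Nbit)=\Nfft/\Nsymb$, so the two definitions agree.

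The argument is essentially algebraic identification, so the main obstacle is bookkeeping: carefully collecting the four exponential prefactors (two in $W^{\rm LS}$, two in $R_0$), using $\sin(-x)=-\sin(x)$ to flip the numerator/denominator signs consistently, and then verifying that the $k$-dependent part of the residual phase assembles into the claimed $T_{\rm shift}$. Since $\alpha$ and $\alpha^{\rm LS}$ are global phases with no impact on the envelope, the only quantity that needs to be matched exactly (up to a constant multiplier) is this linear phase and the real symmetric sinc ratio, which is what Corr.~\ref{Corr:XPhaseRamp} restricts $W[k]$ to be.
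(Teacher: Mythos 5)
Your proposal is correct and is essentially the paper's own argument: the paper obtains Cor.~\ref{Cor:LSfromDFT} precisely ``by identification'' of Lem.~\ref{Prop:LS} with Corr.~\ref{Corr:XPhaseRamp}, i.e., matching the argument of $B(\cdot)$ to fix $\Phi$ and taking $W[k]$ as the ratio $W^{\rm LS}[k]/R_0(\cdot)$, with the same sinc-ratio cancellation, $T_{\rm shift}$ phase, and $0/0$ limit at the middle subcarrier that you describe. The only nuance is the bookkeeping of the $k$-independent phase ($\alpha^{\rm LS}/\alpha$), which is a global phase and immaterial to the envelope, as the paper itself notes.
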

 
It is worth noting that for  $\Nsc$ odd, $\left\lfloor \frac{\Nsc}{2} \right\rfloor =\frac{\Nsc-1}{2}$, and then $\Phi$ in Cor.~\ref{Cor:LSfromDFT} matches exactly~\eqref{eq:PhiFlat1}, which was derived to minimize consecutive pulse combining. For $\Nsc$ even, these values become approximately equal as $\Nsc$ increases.
   
For conventional DFT-s-OFDM with $L=0$ and $\Nsymb=\Nsc$,   $\Phi = \frac{2\pi}{\Nsc}\left\lfloor \frac{\Nsc}{2} \right\rfloor$ in Cor.~\ref{Cor:LSfromDFT},    
which further simplifies to $\Phi =\pi$ with  $\Nsc$ even. In this case, the LS waveform can be implemented using  a $\pm 1$ alternation bit-spreading sequence as  $r[n]=  (-1)^n$.  
 If one selects  $\Phi=0$ instead, then $L=-\left\lfloor \frac{\Nsc}{2} \right\rfloor$, corresponding to the  so-called ``fftshift'' operation that circularly shift the $0$th subcarrier coefficient to the middle subcarrier, but applies after repetition if $\Nsymb < \Nsc$ and not directly after DFT precoding. 
Intuitively, the LS waveform is obtained by mapping the DC component of the input bit string at the input of DFT precoder to the DC component of the output OFDM signal. An effect of this DC-to-DC mapping is that the LP-WUS signal is a properly interpolated version of  the input bits. 

 \begin{figure}[t] 
\centering
\vspace{-0.0cm}
\includegraphics[width=0.5\textwidth]{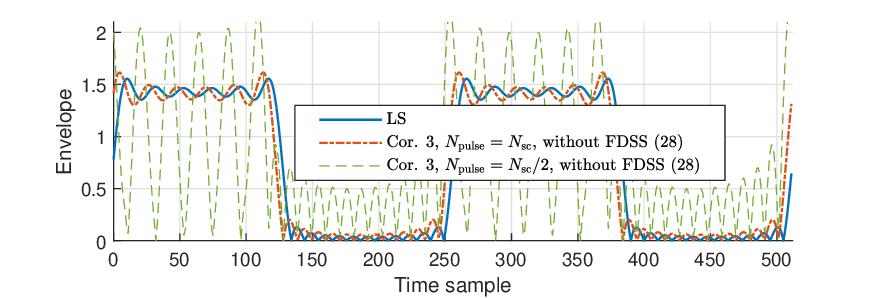}
\vspace{-0.6cm}
	\caption{Illustration of the minor effect of the FDSS window~\eqref{eq:W[k]LS} of  Cor.~\ref{Cor:LSfromDFT}.}
	\label{fig:LSvsDFT}
	\vspace{-0.1cm}
\end{figure}

 \begin{figure}[t] 
\centering
\vspace{-0.0cm}
\includegraphics[width=0.49\textwidth]{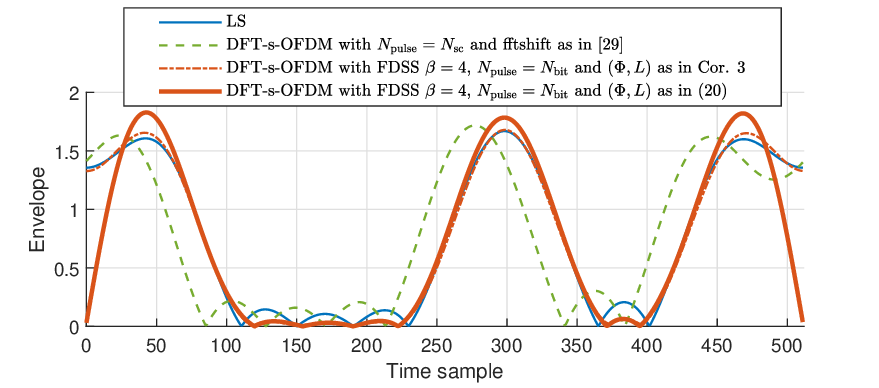}
\vspace{-0.5cm}
	\caption{Example of $\Nbit=6$ bits OOK ($[1 0 0 1 0 1]$) sent in  narrow band of $\Nsc = 12$ as considered for Ambient IoT application~\cite{R1-2403954}.}
	\label{fig:AIoT_OOK}
	\vspace{-0.2cm}
	\end{figure}

The $(\Phi,L)$ relationship  has an important impact on the signal envelope, and departing too much from it can result in large envelope fluctuation as in Fig.~\ref{fig:PhaseRampImpact}.  Conversely,   the effect of the FDSS window~\eqref{eq:W[k]LS} can be negligible especially for $\Nsymb = \Nsc$ large: When $\Nsymb = \Nsc$ the amplitude of~\eqref{eq:W[k]LS} is nearly constant (almost numerically identical to a Kaiser window with $\beta\approx 2$), and its phase  becomes  negligible as $\Nsymb$ grows. 
Therefore, choosing the largest possible DFT precoder with $\Nsymb=\Nsc$ may allow the FDSS to be removed  with little impact. However when $\Nsymb < \Nsc $ the FDSS effect becomes more significant.   This is illustrated in Fig.~\ref{fig:LSvsDFT}, which compares the LS waveform obtained from   Cor.~\ref{Cor:LSfromDFT} with and without the FDSS window~\eqref{eq:W[k]LS}, using $\Nbit=4$ and $\Nsc=\Nsymb=48$.

Finally, we remark that in narrowband scenarios such as those in A-IoT applications where larger bit rate and smaller bandwidth than LP-WUS are considered, approximating an ideal --inherently wideband-- rectangular signal becomes difficult. In such cases, investigating alternative DFT-s-OFDM-based designs may offer interesting solutions. 
We provide an example in Fig.~\ref{fig:AIoT_OOK} for $\Nbit= 6$ bits per OFDM symbol using $\Nsc=12$ subcarriers, as considered in~\cite{R1-2403954}. This figure illustrates the DFT-s-OFDM-based implementation therein,  corresponding here to $\Nsymb= \Nsc$, $L= \Nsc/2$ (referred as `fftshift'), and $\Phi =0$. This implementation satisfies the relationship between $L$ and $\Phi$ as in Cor.~\ref{Cor:LSfromDFT} but differs from the LS waveform by not including the FDSS~\eqref{eq:W[k]LS}, which results primarily in a time shifting difference. 
Fig.~\ref{fig:AIoT_OOK} also shows an alternative signal design using the minimum number of DFT-s-OFDM pulses $\Nsymb=\Nbit$ combined with more spectrum shaping using a Kaiser window with $\beta=4$; and both the $(\Phi,L)$ relationship from Cor.~\ref{Cor:LSfromDFT}  and from~\eqref{eq:PhiFlat1}. Since $\Nsc$ is even and small, we have $\Phi= \pi \left(L/3+ 2 \right)$ in Cor.~\ref{Cor:LSfromDFT}, which differs from  $\Phi=\pi \left(L/3+11/6\right)$ in~\eqref{eq:PhiFlat1}. As shown, the DFT-s-OFDM waveform with $\Nsymb=\Nbit$ and $\Phi$ from~\eqref{eq:PhiFlat1} provides better pulse transition and smaller sidelobe in OFF symbols, resulting in power-boosted ON symbols which could facilitate detection.

\begin{figure*}
\vspace{-0.2cm}  
\centering
\subfigure[$\Nsymb=\Nsc/2$, envelope. ]{\includegraphics[width=.49\textwidth]{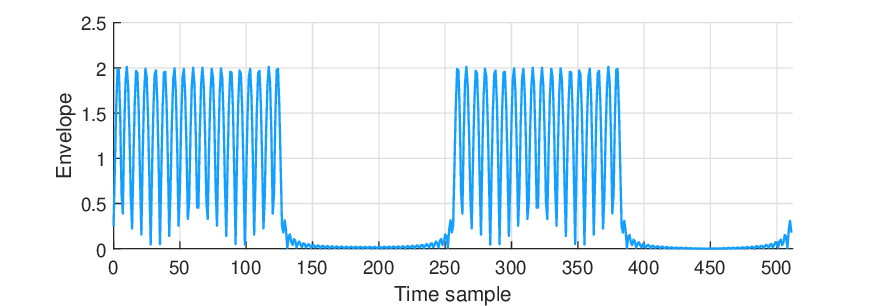}}
\subfigure[$\Nsymb=\Nsc/2$, power of subcarrier coefficients. ]{\includegraphics[width=.49\textwidth]{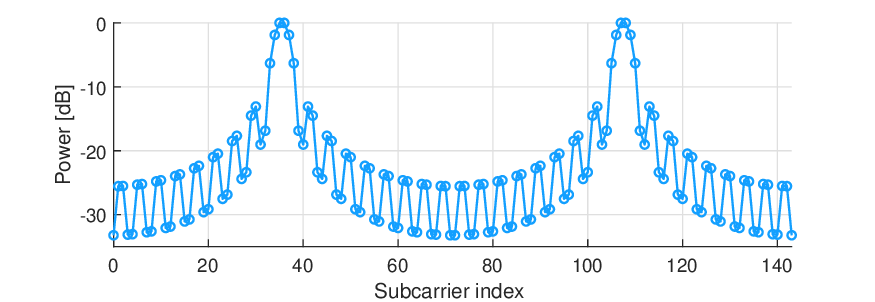}}
\subfigure[$\Nsymb=\Nsc/3$, envelope.  ]{\includegraphics[width=.49\textwidth]{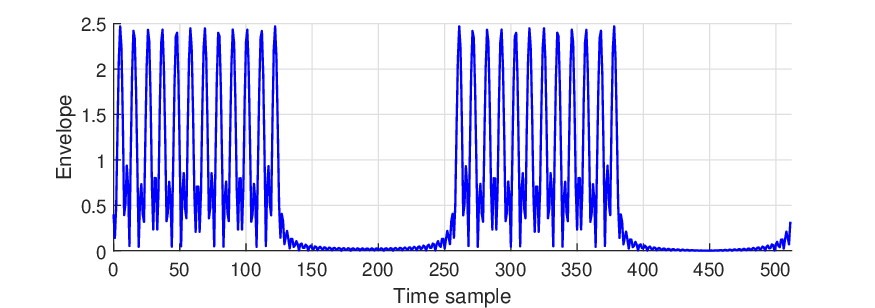}}
\subfigure[$\Nsymb=\Nsc/3$, power of subcarrier coefficients.  ]{\includegraphics[width=.49\textwidth]{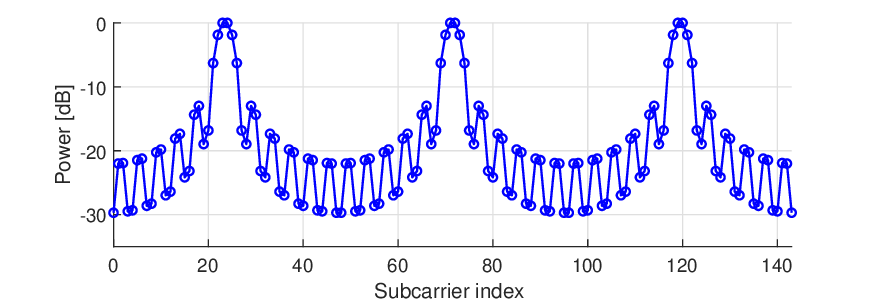}}
	\vspace{-0.2cm}
	\caption{Illustration of rectangular-like OOK with repeated spectrum using $ \Phi = \frac{\pi (2L+\Nsymb-1)}{\Nsymb}$; with $\Nsc=144$ for bit string $[1 0 1 0]$.}
	\label{fig:FreqRep}
	\vspace{-0.4cm}
	\end{figure*}

\vspace{-0.0cm}
\section{Shaped OOK for 5G  LP-WUS}   
In 5G LP-WUS,  only $\Nbit= 1$, $2$, or $4$  can be transmitted, while the ADC sampling rate at the LR is assumed to be  much higher than  such low OOK data rates. Indeed, it was shown in~\cite{3GPPTR38.869} that an ADC sampling rate of, for example, 3.84 MHz was usable to improve performance while having little impact on the LR's power consumption. With sufficient ADC precision at LR, the OOK detection performance principally depends on  the energy split between the ON and OFF symbols, while minimizing the envelope fluctuation inside ON symbols may actually matter little. Therefore, the standard rectangular-like OOK waveform as targeted in the previous section does not provide the best performance for LP-WUS, and in this section, we propose and analyze FD and TD shaping features  
 for performance improvement in such application environments. 
 
\subsection{FD Shaping for Improved Performance Against Frequency-Selective Fading} 
An LP-WUS bandwidth no larger than $5$ MHz was recommended in~\cite{3GPPTR38.869} to facilitate deployment, amounting to less than 144 subcarriers with $30$  kHz spacing in 3GPP numerology. 
At the same time, larger signal bandwidth can increase  frequency diversity and thus robustness against frequency selective fading,
such that  most sources in~\cite{3GPPTR38.869} considered exploiting as much  as possible of this $5$ MHz. In this context of a relatively large bandwidth with a low data rate, rectangular-like OOK signals discussed in the previous section have concentrated power spectrum that makes them vulnerable to frequency-selectivity of wireless fading.  Alternative signal designs with more spread-out spectrum, but without compromising much on the ON/OFF energy split, can thus improve performance\footnote{However, for the purpose of designing OOK synchronization signal, an OOK signal with narrow spectrum  can provide benefits for performing frequency-synchronization in frequency-selective channels~\cite{R1-2403948}.}.  Below are  methods to achieve this.

\subsubsection{Frequency-Repetition} \label{sec:FreqRep}
Starting  from the rectangular-like OOK design discussed so far,  
a generic method to achieve higher frequency diversity  is to repeat its subcarrier coefficients on a wider bandwidth.  
This can be achieved in~\eqref{eq:X^W[k]} by selecting $\Nsymb \ll \Nsc$, i.e. modulating less, and more spread-apart, DFT-s-OFDM pulses, leading inevitably to a comb-like signal shape. In order to guarantee a good ON/OFF energy split,  we first apply~\eqref{eq:PhiFlat1} to the intermediate narrow band of $\Nsymb$ subcarriers, i.e. setting $ \Phi = \frac{\pi (2L+\Nsymb-1)}{\Nsymb}$, before repeating in~\eqref{eq:X^W[k]} to populate the $\Nsc$ subcarriers. An example of the obtained waveforms and power distribution over subcarrier indices is shown on Fig.~\ref{fig:FreqRep} with $\Nsc=144$ and $\Nbit = 4$. No FDSS is used here other than a linear phase for time-shifting the signal by half a pulse.  
As it can be seen the ON symbols cannot be flattened since the main lobes of neighboring pulses simply do not overlap, but by selecting a proper $(\Phi,L)$,  the OFF symbols can be guaranteed to be close to zero. Similarly to Fig.~\ref{fig:PhaseRampImpact}, variations over $(\Phi,L)$ can, in fact, be tolerated with little effect on the signal shape until high envelope lobes are created in the OFF symbols, similar to the green curve of Fig.~\ref{fig:LSvsDFT}. 
The drawback of this approach is as $\Nsymb \to \Nbit$, the signal becomes less robust to lower sampling rate if considered, and the PAPR of the overall OFDM transmission may increase even beyond what it is already with QAM symbols. 
 
\begin{figure*}
\vspace{-0.2cm}  
\centering
\subfigure[ Envelope with root $u=1$ and shift $s=0$. \label{fig:ZCa}]{\includegraphics[width=.49\textwidth]{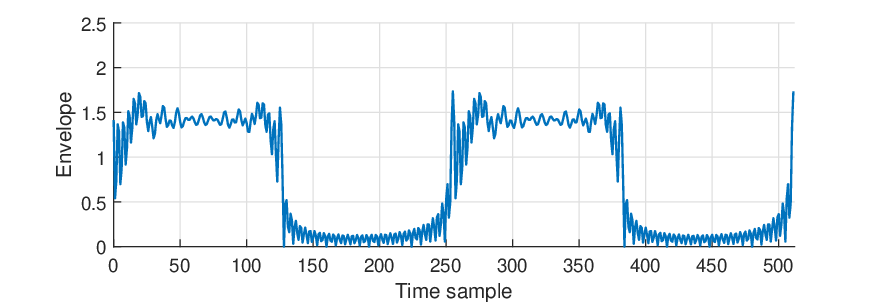}}
\subfigure[Envelope with root   $u=(N_{\rm seg}/2+1)$ and shift $s=0$. \label{fig:ZCb}]{\includegraphics[width=.49\textwidth]{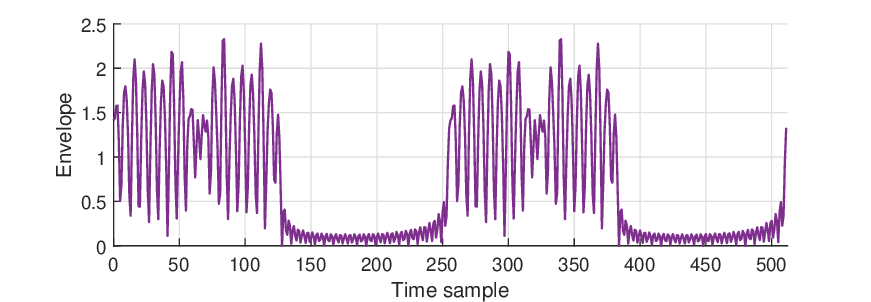}}
\subfigure[Envelope with root $u=(N_{\rm seg}-1)$ and shift $s=0$.\label{fig:ZCc}]{\includegraphics[width=.49\textwidth]{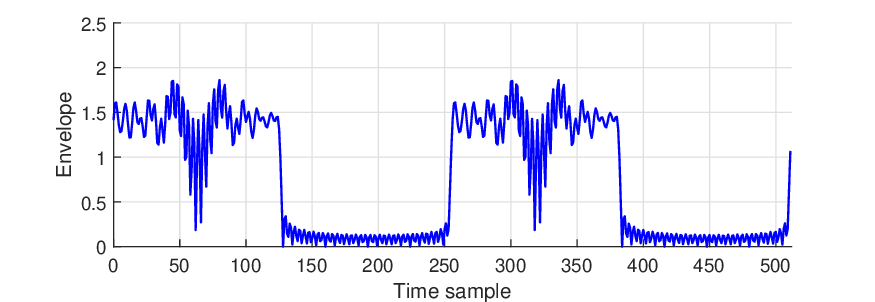}}
\subfigure[  Envelope with root  $u=(N_{\rm seg}-1)$ and shift $s=N_{\rm seg}/2$.  \label{fig:ZCd}]{\includegraphics[width=.49\textwidth]{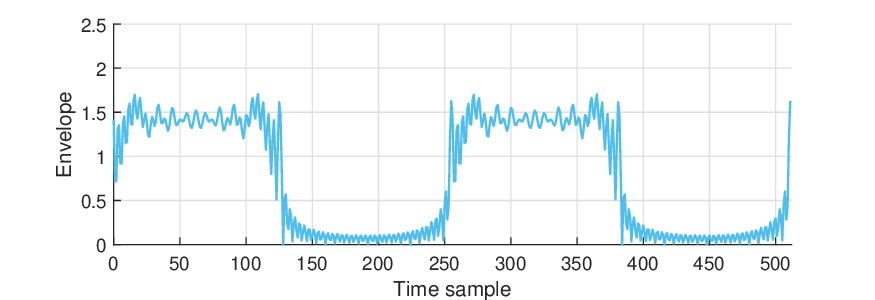}}
\subfigure[Deviation of ZC phase increment from $\Phi \approx \pi$ in~\eqref{eq:PhiFlat1}  for signals in (a)-(d)  \label{fig:ZCf}]{\includegraphics[width=.49\textwidth]{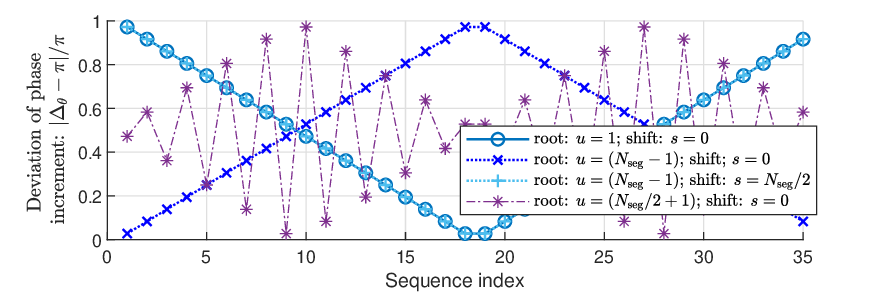}}
\subfigure[Power of subcarrier coefficients for signals in (a)-(d) \label{fig:ZCe}]{\includegraphics[width=.49\textwidth]{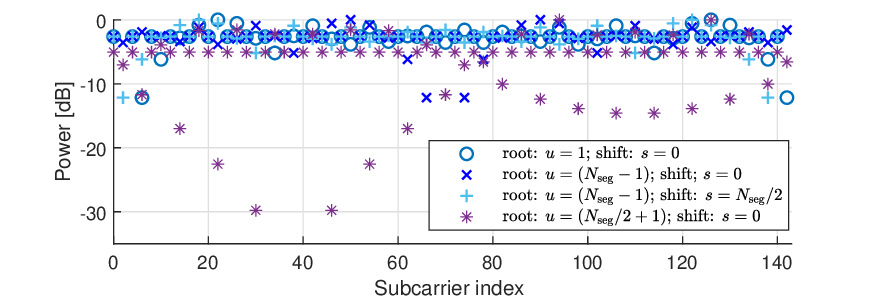}}
	\vspace{-0.4cm}
	\caption{Examples of OOK signal where $r_0[m]$ is a ZC sequence; $\Nsc=144$ with bit string $[1 0 1 0]$ and sequence length is $\Nseg=36$.}
	\label{fig:ZC}
	\vspace{-0.2cm}
	\end{figure*}

\subsubsection{Phase Scrambling in Overlaid Sequence} 
Methods for varying the phase of the overlaid sequence  $r_0[k]$ in order to flatten the spectrum were considered in~\cite{R1-2306234}. While by scrambling the phase of the spreading sequence, the envelope shape of the ON symbol may become less controlled, the ON/OFF energy split can still be expected to remain good as it is unlikely that many distant sidelobes in the OFF symbols would coherently combine (cf. $O_l[n]$ in~\eqref{eq:Ol[n]} for samples outside of $[l\Nseg, \, (l+1)\Nseg-1] $).

As shown from Corr.~\ref{Corr:D=BR}, the FD power distribution  depends on two sequences: $R_0[k]$ from the overlaid sequence, and $B[k]$ from the bits 
which inherently has  
a repetition structure  such that its power is already well-distributed across the band. The FD power distribution of DFT-s-OFDM-based OOK satisfies then the following property shown in Appendix~\ref{App:ProofLemPSD}.
\begin{Lem}  \label{Lem:E[|D_0[k]|^2]} 
With Manchester coding\footnote{Without Manchester coding, %
$ E[|B[k]|^2]= \frac{\Nbit(\Nbit+1)}{4}  $ for $k= (0 \, {\rm mod}  \,  \Nbit)$ and  
   $ E[|B[k]|^2] = \frac{\Nbit}{4}$ otherwise; and more than half, precisely $\frac{(\Nbit+1)}{2\Nbit}$, of the \emph{average} sum power is allocated on $\Ccal$.}, 
	half of the sum power of $\{D_0[k]\}_{k=0}^{\Nsymb-1}$  
	is always allocated on a regular comb of subcarrier indices $\Ccal = \{k'\Nbit\}_{k'=0}^{\Nseg-1}$. On average, the power distribution is given by $ E[|D_0[k]|^2] =  |R_0[k]|^2 E[|B[k]|^2]$ with  
\begin{equation} \label{eq:E[|B[k]|^2]} 
E[|B[k]|^2] = 
\begin{cases}    \frac{\Nbit^2}{4}  \quad\quad   {\rm for} \;\; ( k \,{ \rm mod}\,\Nbit) = 0  \\
    \frac{\Nbit}{4} \left(1-\cos \frac{2\pi k }{\Nbit}\right) \quad {\rm otherwise}
		\end{cases}.  
\end{equation}   
\end{Lem}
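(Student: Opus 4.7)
I would derive both claims from Corr.~\ref{Corr:D=BR}, which gives $D_0[k]=B[k]R_0[k]$ with $B[k]=\Bcal[k\; {\rm mod}\; \Nbit]$. Since $R_0[k]$ is data-independent, the factorization $|D_0[k]|^2=|B[k]|^2|R_0[k]|^2$ reduces the expectation claim to computing $E[|\Bcal[k]|^2]$ for $k=0,\ldots,\Nbit-1$, and turns the half-power claim into a Parseval statement on the sequence $|B[k]|^2$.

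First, substituting the Manchester structure $(b[2n],b[2n+1])=(1-b_o[n],b_o[n])$ into~\eqref{eq:Bcal[k]} and pairing consecutive terms yields
\[
\Bcal[k]=\sum_{n=0}^{\Nbo-1}e^{-\jrm 2\pi kn/\Nbo}\bigl(1+b_o[n](e^{-\jrm \pi k/\Nbo}-1)\bigr).
\]
For $k\equiv 0\;({\rm mod}\;\Nbit)$, both exponentials collapse to one, giving $\Bcal[k]=\Nbo=\Nbit/2$ \emph{deterministically}---not merely in expectation---which yields the first row of~\eqref{eq:E[|B[k]|^2]} and simultaneously $|B[k'\Nbit]|^2=\Nbit^2/4$ exactly for every $k'$. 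For $k\not\equiv 0\;({\rm mod}\;\Nbit)$, I would expand $|\Bcal[k]|^2$ and take expectation using $E[b_o[n]]=1/2$ and $E[b_o[n]b_o[m]]=\tfrac14+\tfrac14\delta_{nm}$; after applying $|e^{-\jrm\pi k/\Nbo}-1|^2=2(1-\cos(2\pi k/\Nbit))$, the mean contribution cancels and only the variance survives, producing the second case of~\eqref{eq:E[|B[k]|^2]}.

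For the deterministic half-power claim on $\Ccal$, I would apply Parseval twice. The total sum, via the $\Nsymb$-point DFT~\eqref{eq:D[k]} and~\eqref{eq:d[m]}, is
\[
\sum_{k=0}^{\Nsymb-1}|D_0[k]|^2=\Nsymb\sum_{m=0}^{\Nsymb-1}|d[m]|^2=\Nsymb\Bigl(\sum_{j=0}^{\Nseg-1}|r_0[j]|^2\Bigr)\sum_{l=0}^{\Nbit-1}b[l]=\tfrac12\Nsymb\Nbit\sum_j|r_0[j]|^2,
\]
where the key identity $\sum_l b[l]=\Nbo$ holds \emph{for every} Manchester-coded bit string (using $b_o[n]+(1-b_o[n])=1$). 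On the comb, the exponent of $R_0$ collapses via $\Nsymb=\Nseg\Nbit$ to $e^{-\jrm 2\pi k'm/\Nseg}$, so $R_0[k'\Nbit]$ is an $\Nseg$-point DFT of $r_0$ and $\sum_{k'=0}^{\Nseg-1}|R_0[k'\Nbit]|^2=\Nseg\sum_j|r_0[j]|^2$. Combined with $|B[k'\Nbit]|^2=\Nbit^2/4$, the comb power equals $\tfrac14\Nsymb\Nbit\sum_j|r_0[j]|^2$, exactly half of the total and independent of the bit values.

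The main obstacle is showing that the compact formula $(\Nbit/4)(1-\cos(2\pi k/\Nbit))$ covers \emph{both} off-comb subcases. In the generic subcase the geometric sum $\sum_n e^{-\jrm 2\pi kn/\Nbo}$ vanishes and the variance term alone produces the result; but in the subcase $k\;{\rm mod}\;\Nbit=\Nbo$ this sum equals $\Nbo$ and the deterministic part of $\Bcal[k]$ interacts with the stochastic one, so one must invoke $\cos(\pi(2j+1))=-1$ to see that the deterministic cross-contribution still vanishes, unifying the two cases into the single closed form stated in~\eqref{eq:E[|B[k]|^2]}.
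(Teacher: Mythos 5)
Your proposal is correct and follows essentially the same route as the paper's Appendix: factor $D_0[k]=B[k]R_0[k]$, compute $E[|\Bcal[k]|^2]$ by expanding the DFT of the Manchester-coded bits over i.i.d.\ uniform info bits, and establish the exact half-power split on $\Ccal$ via Parseval applied to both the $\Nsymb$-point DFT of $d[m]$ and the $\Nseg$-point DFT of $r_0$, using that $\sum_l b[l]=\Nbo$ and $|B[k'\Nbit]|^2=\Nbo^2$ hold deterministically. Your explicit treatment of the off-comb subcase $k\equiv\Nbo \pmod{\Nbit}$ is in fact slightly more careful than the paper's, which applies the identity~\eqref{eq:DoubleSumIdentity} uniformly for $k\neq 0$ even though it fails at odd multiples of $\Nbo$ -- the stated formula survives there only because the factor $\tfrac12\left(1+\cos\tfrac{\pi k}{\Nbo}\right)$ vanishes, exactly the cancellation you point out.
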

 
While having half of the power allocation on this regular comb of subcarriers $\Ccal$ is independent of $R_0[k]$, the power level fluctuation on $\Ccal$ depends on  it. Meanwhile, the power level of $R_0[k]$  on $\Ccal$ 
is directly given by the DFT of $r_0[l]$, $ R_0'[k] = \sum_{l=0}^{\Nseg-1}r_0[l] e^{-\jrm \frac{2 \pi}{\Nseg}  kl} $, since $R_0[k]$ is an  $\Nbit$-upsampled and sinc-interpolated version of $R_0'[k]$ given by 
\begin{equation} \label{eq:R_0[k]Interpolation}
\!\!\!R_0[k] = \begin{cases} 
R'_0[k/\Nbit ] \quad \quad\quad  {\rm for} \;\; ( k \,{ \rm mod}\,\Nbit) = 0 \\
 \displaystyle  \frac{1}{\Nseg}\!\!\! \sum_{h=0}^{\Nseg-1}\!\!\! R'_0[h] I[k-\Nbit h] \quad {\rm otherwise,} \!\!\!\end{cases}
\end{equation}
with \mbox{$I[k] =e^{-\jrm  \frac{\pi (\Nseg-1)k }{\Nsymb }} \frac{\sin\left( \frac{\pi k}{\Nbit} \right)}{\sin\left( \frac{\pi k }{\Nsymb}\right)}$}, see Appendix~\ref{App:R0[k]}.
 
So by combining Lem.~\ref{Lem:E[|D_0[k]|^2]} with~\eqref{eq:R_0[k]Interpolation}, the average power of the subcarrier coefficients  $E[|D_0[k]|^2]$ is completely characterized; from which one sees that  half of the average sum power 
can be directly controlled by  the overlaid sequence $r_0[l]$ to be distributed with a constant level over the band.

\paragraph{Random Overlaid Sequence}
Randomizing the phase of $r_0[m]$  has been proposed in~\cite{R1-2306234} to ensure a  well-spread energy along the band. Indeed, this approach equalizes on average the power of its DFT coefficients since it leads to  $E[|R_0[k]|^2]= \Nseg,\, \forall k$ , which combined with Lem.~\ref{Lem:E[|D_0[k]|^2]} gives 
\begin{Cor} \label{Corr:D[k]withRandSeq}
 With random overlaid sequence such that $E[r_0[m]r_0^*[m']]= \delta_{m,m'}$,  half of the average sum power of $\{D_0[k]\}_{k=0}^{\Nsymb-1}$  is equally distributed on a $\Nbit$-spaced comb. 
\end{Cor}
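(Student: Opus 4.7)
The plan is to derive the corollary as a direct application of Lemma~\ref{Lem:E[|D_0[k]|^2]} combined with a standard second-order calculation for the random overlaid sequence. Assuming that $r_0[m]$ is statistically independent of the bit string $b[l]$, I would first take the expectation of the identity $|D_0[k]|^2 = |R_0[k]|^2 |B[k]|^2$ (which follows from $D_0[k] = B[k] R_0[k]$ in Corr.~\ref{Corr:D=BR}) jointly over both randomizations, yielding
\begin{equation}
E[|D_0[k]|^2] = E[|R_0[k]|^2]\, E[|B[k]|^2] .
\end{equation}
In this way, the deterministic $|R_0[k]|^2$ factor appearing in Lemma~\ref{Lem:E[|D_0[k]|^2]} is simply replaced by its own average when an additional expectation over the random $r_0$ is taken.

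Next, I would evaluate $E[|R_0[k]|^2]$ directly from the DFT definition $R_0[k] = \sum_{m=0}^{\Nseg-1} r_0[m] e^{-\jrm \frac{2\pi}{\Nsymb} k m}$. Expanding the squared modulus and invoking the assumed orthogonality $E[r_0[m] r_0^*[m']] = \delta_{m,m'}$, all cross-terms vanish and only the $\Nseg$ diagonal ones survive, giving $E[|R_0[k]|^2] = \Nseg$ for every $k \in \{0,\ldots,\Nsymb-1\}$. Combined with the comb expression from Lemma~\ref{Lem:E[|D_0[k]|^2]}, where $E[|B[k]|^2] = \Nbit^2/4$ on the subcarrier set $\Ccal = \{k'\Nbit\}_{k'=0}^{\Nseg-1}$, this immediately gives the uniform level $E[|D_0[k]|^2] = \Nseg \Nbit^2/4$ for all $k \in \Ccal$, i.e., an equal distribution across the $\Nbit$-spaced comb with $\Nseg$ taps.

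Finally, the statement that this uniformly distributed portion corresponds to exactly half of the average sum power follows from Lemma~\ref{Lem:E[|D_0[k]|^2]} itself, which asserts the half-on-$\Ccal$ property pathwise and thus also in expectation, regardless of the choice of $r_0[m]$. There is no real obstacle here: the corollary is essentially a one-line consequence once $E[|R_0[k]|^2]$ is computed. The only step worth stating carefully is the replacement of $|R_0[k]|^2$ by its expectation, which is justified by the independence between the overlaid sequence and the bit string; all subsequent manipulations reduce to orthogonality of complex exponentials.
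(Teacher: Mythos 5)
Your proposal is correct and follows essentially the same route as the paper: the paper likewise notes that $E[r_0[m]r_0^*[m']]=\delta_{m,m'}$ gives $E[|R_0[k]|^2]=\Nseg$ for all $k$ and then invokes Lem.~\ref{Lem:E[|D_0[k]|^2]} to conclude. Your added remarks (independence of $r_0$ from the bits to factor the expectation, and the pathwise half-power property on $\Ccal$ carrying over in expectation) are exactly the details left implicit in the paper's one-line argument.
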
 
The drawback of this approach is that the signal is not controlled anymore, and therefore, its knowledge could not be exploited by receivers with higher-end capabilities.

\paragraph{Overlaid CAZAC sequence}
The approach retained for 5G LP-WUS is to use a ZC sequence\footnote{For the purpose also of carrying  the bits by overlaid sequence selection, in parallel to OOK, enabling improved performance for higher-end LRs.}  
which is a constant-envelope zero autocorrelation (CAZAC) sequence.  A well-known property of a CAZAC sequence is that its DFT is also of constant magnitude~\cite{PopovicArxiv20}, such that $|R_0'[h]|^2=\Nseg,\, \forall h$ in~\eqref{eq:R_0[k]Interpolation}.  Combined with Lem.~\ref{Lem:E[|D_0[k]|^2]} this gives 
 the same result  as in Corr.~\ref{Corr:D[k]withRandSeq} but for each OFDM symbol without averaging.  
\begin{Cor} \label{Corr:D[k]withZC}
With a CAZAC overlaid sequence,   half of the sum power of $\{D_0[k]\}_{k=0}^{\Nsymb-1}$  is equally distributed on a regular $\Nbit$-spaced comb. 
\end{Cor}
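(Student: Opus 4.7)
The approach is to use the factorization $D_0[k]=B[k]R_0[k]$ from Corollary~\ref{Corr:D=BR} and show that \emph{each} factor has constant squared magnitude on the comb $\Ccal$, so that $|D_0[k]|^2$ is itself constant on $\Ccal$. The one-half part of the claim is already supplied by Lemma~\ref{Lem:E[|D_0[k]|^2]}, which asserts the power split on $\Ccal$ instantaneously (not merely in expectation) under Manchester coding. Hence the only new work is to promote the on-comb equalization -- which in Corollary~\ref{Corr:D[k]withRandSeq} only held after averaging over the random sequence -- to a deterministic, per-OFDM-symbol statement when $r_0[\cdot]$ is CAZAC.

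For the bit factor, by the $\Nbit$-periodic extension in Corollary~\ref{Corr:D=BR}, $B[k'\Nbit]=\Bcal[0]=\sum_{l=0}^{\Nbit-1}b[l]$ for every $k'\in\{0,\ldots,\Nseg-1\}$. Under the Manchester rule~\eqref{eq:MC}, each info bit $b_o[n]$ contributes exactly one $1$ through the pair $(\overline{b_o[n]},b_o[n])$, so $\Bcal[0]=\Nbo=\Nbit/2$ \emph{deterministically}, independent of the payload. Therefore $|B[k]|^2=\Nbit^2/4$ for every $k\in\Ccal$ and every bit string. For the sequence factor, on $\Ccal$ equation~\eqref{eq:R_0[k]Interpolation} collapses to $R_0[k'\Nbit]=R_0'[k']$, the length-$\Nseg$ DFT of $r_0[\cdot]$. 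The defining CAZAC spectral property, namely that the DFT of a constant-modulus zero-autocorrelation sequence is itself of constant modulus~\cite{PopovicArxiv20}, combined with Parseval on the unit-modulus samples of $r_0[\cdot]$, fixes the normalization to $|R_0'[k']|^2=\Nseg$ for every $k'$.

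Multiplying the two constants yields $|D_0[k]|^2=\Nseg\Nbit^2/4$ identically for every $k\in\Ccal$, which together with the instantaneous half-power statement of Lemma~\ref{Lem:E[|D_0[k]|^2]} proves the corollary. As a cross-check, Parseval gives $\sum_{k=0}^{\Nsymb-1}|D_0[k]|^2=\Nsymb\sum_m|d[m]|^2=\Nsymb\cdot\Nseg\Nbo=\Nsymb^2/2$, while the comb contributes $\Nseg\cdot(\Nseg\Nbit^2/4)=\Nsymb^2/4$, matching the predicted one-half split. The main subtlety, which is more bookkeeping than a genuine obstacle, is confirming that both the Manchester identity $\Bcal[0]=\Nbit/2$ and the CAZAC identity $|R_0'[k']|^2=\Nseg$ hold \emph{pointwise}, since this is precisely what upgrades the average statement of Corollary~\ref{Corr:D[k]withRandSeq} to a per-symbol one for the CAZAC choice of overlaid sequence.
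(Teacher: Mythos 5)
Your proof is correct and follows essentially the same route as the paper: it invokes the constant-magnitude DFT property of CAZAC sequences to fix $|R_0'[k']|^2=\Nseg$ on the comb and combines this with the per-symbol half-power statement of Lem.~\ref{Lem:E[|D_0[k]|^2]}, exactly as done in the text. Your added details (the deterministic identity $\Bcal[0]=\Nbo$ under Manchester coding and the Parseval cross-check) are sound but only make explicit what the paper leaves implicit.
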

 
\emph{Envelope Shape With Overlaid ZC Sequences:} More precisely, if the overlaid sequence is a ZC sequence defined as $r_0[m] = e^{-\jrm \frac{\pi u(m+s)(m+s+\delta)}{\Nseg}}$ where $\delta  =(\Nseg \mod 2)$, different root $1 \leq u\leq \Nseg-1 $ ($u$ must be relative prime to $\Nseg$) and  shift  $0 \leq s  \leq \Nseg-1 $ would create different ON symbol shapes as shown in Figs.~\ref{fig:ZC}(a)-(d) in which $\Nsymb =\Nsc=144$, $\Phi=L=0$, $\Nbit =4$ and $\Nseg=36$.  
To apprehend this, remark that the phase increment between consecutive elements in a ZC sequence is a linear function as
\begin{equation}
\Delta_{\theta} [m] = \angle \frac{r_0[m]}{r_0[m-1]} = -\pi \frac{u(2m+2s-1+\delta)}{\Nseg} \mod 2 \pi.
\end{equation} 
Comparing to  
\eqref{eq:PhiFlat1} for flattening ON symbols, the closer $\Delta_{\theta} [m]$ would be constant and close to $\Phi \approx \pi$ in this case, the better the DFT-s-OFDM pulses  are expected to constructively combine. The deviations  of this linear phase increment  $\Delta_{\theta}$ from a constant increment of $\pi$ are shown on Fig.~\ref{fig:ZCf}. With roots closer to $(0 \mod \Nseg)$, e.g. $u =1$ or $u=(\Nseg-1)$, the increment changes slower than with `middle roots', e.g. $u = (\Nseg/2 +1)$, which is better for pulse combining. The largest deviations for $(u =1,s=0)$ are at the beginning and at the end of the sequence which affect the ON symbol edges. On the contrary, with  $(u =\Nseg-1,s=0)$, the largest deviation is in the middle of the sequence, setting the highest envelope fluctuation  right in the middle of the ON symbol. By selecting a different shift $s=\Nseg/2$ for $u =(\Nseg-1)$, the deviation of the phase increment then overlaps with case $(u =1,s=0)$, and thus both ZC sequences create  almost the same OOK envelope shape. Finally, the corresponding powers of subcarrier coefficients for these signals are shown in Fig.~\ref{fig:ZCe}. Every two coefficients are equal to zero which is due to $B[k]$ being equal here to the repetition of the first column in the second table of Table~\ref{tab:12bit}.  Every fourth coefficient has the same power as shown in Corr.~\ref{Corr:D[k]withZC}, accounting for half of the sum power as stated in Lem.~\ref{Lem:E[|D_0[k]|^2]}. Other coefficients' power fluctuates according to~\eqref{eq:R_0[k]Interpolation}.

\subsection{TD Shaping for Improved Robustness Against Timing Error}  
It was observed in~\cite{3GPPTR38.869} that while OOK-modulated LP-WUS is robust against frequency synchronization error, it is sensitive to timing error. The larger $\Nbit$ is, the smaller the OOK symbols are, and the less maximum timing error can be tolerated. 
LRs are expected to use low-cost oscillator with low-precision frequency synchronization, resulting in timing clock drifting. Since LP-WUS is a sporadic message based on needs, its precise transmission time and thus resulting timing drift from previous synchronization acquisition cannot be known at LR.  We will model this timing error as uniformly distributed in range $[-\tau_{\rm err}, \, \tau_{\rm err}] $ where  $\tau_{\rm err}$ is the maximum expected timing error around current receiver timing, given for example from the synchronization signal periodicity.  
 
On Fig.~\ref{fig:ConcentratedOOK}, window boundaries for OOK symbol detection are depicted by thick black lines.   A normal OOK, where ON symbols have been designed to have their energy spread over one symbol duration as shown in Fig.~\ref{fig:ConcentratedOOK}(a), will lead, in case of timing offset at the LR, to a large energy leakage to the adjacent symbol detection window  as shown on Fig.~\ref{fig:ConcentratedOOK}(b). However,  by concentrating the energy of the ON symbols from the right side toward its center as in Fig.~\ref{fig:ConcentratedOOK}(c), this energy leakage at the receiver  can be  significantly reduced as shown on Fig.~\ref{fig:ConcentratedOOK}(d), bringing robustness against positive timing offset. Equivalently, concentrating the ON symbol from the left side would bring robustness against negative timing offset. 

Similarly, channel multi-path delays are also likely to cause inter-OOK-symbol interference within one OFDM symbol if there is no guard time between them. 
Compared to drifting from oscillator which may be assumed to have a symmetrical  effect, channel time-dispersion is more likely to result in replica signals with positive delays w.r.t. synchronization timing.

\begin{figure}[t]
\vspace{-0.0cm}  
\centering
\subfigure[Normal OOK, transmitted.]{\includegraphics[width=.49\textwidth]{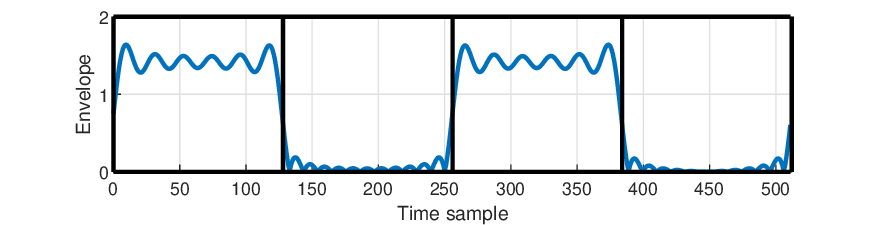}}
\subfigure[Normal OOK, received. ]{\includegraphics[width=.49\textwidth]{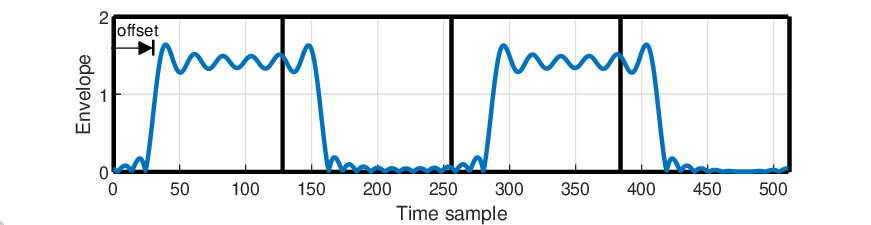}}
\subfigure[Concentrated OOK, transmitted.]{\includegraphics[width=.49\textwidth]{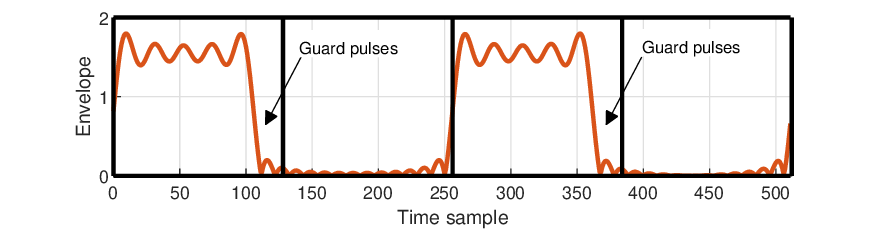}}
\subfigure[Concentrated OOK, received.]{\includegraphics[width=.49\textwidth]{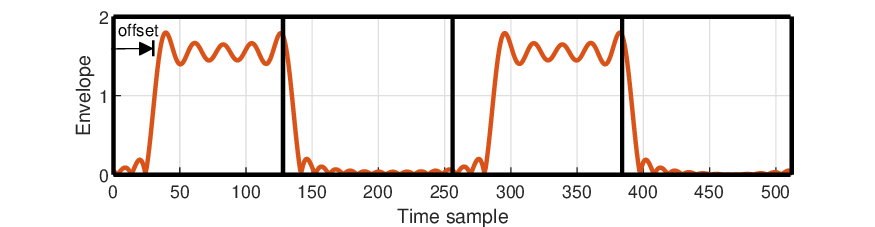}}
	\vspace{-0.4cm}
	\caption{Illustration of concentrating ON symbol energy against timing error.}
	\label{fig:ConcentratedOOK}
	\vspace{-0.4cm}
	\end{figure}

\subsubsection{Concentrated DFT-s-OFDM based OOK} For DFT-s-OFDM based OOK, this energy concentration can be achieved through windowing inside the overlaid sequences $r_0[m]$. To minimize any energy leakage, we consider rectangular windowing by setting the first and last elements in the overlaid sequence $r_0[m]$ to zero. This corresponds to introduce some guard --unmodulated-- DFT-s-OFDM pulses in~\eqref{eq:Ol[n]}. Using  $N_{\rm Lgp}$ left and $N_{\rm Rgp}$ right guard pulses, the overlaid sequence satisfies now
\begin{equation}
r_0[m] = 0 \text{ for }  \begin{cases}  0\leq m \leq N_{\rm Lgp}-1 \\  \Nseg-N_{\rm Rgp}\leq m \leq \Nseg-1 \end{cases}.
\end{equation}
Obviously, $N_{\rm Lgp} +N_{\rm Rgp} < \Nseg$ must be fulfilled, and the resulting effective length for using, e.g., a ZC overlaid sequence becomes $\Nseg' =\Nseg -N_{\rm Lgp}- N_{\rm Rgp}$.

\subsubsection{How Many Guard Pulses?} 
To mitigate also channel time dispersion, it may be beneficial to consider more guard pulses on the right than on the left $N_{\rm Rgp}\geq N_{\rm Lgp}$, and thus an asymmetrical ON symbol shape.  
As one OFDM symbol has a time duration of $1/f_{\rm sc} $,   
one pulse has most of its energy  spanning a duration of $1/(\Nsymb f_{\rm sc}) $. So, in order to cushion a maximum timing error of $\tau_{\rm err}$ in addition to an expected channel spreading of $\tau_h$, a rule of thumb for dimensioning the number of guard pulses is therefore 
\begin{eqnarray}
N_{\rm Lgp} &\approx& \tau_{\rm err} \Nsymb  f_{\rm sc} \label{eq:NLgp}\\
N_{\rm Rgp} &\approx&  (\tau_{\rm err}+ \tau_h) \Nsymb  f_{\rm sc}. \label{eq:NRgp}
\end{eqnarray}    

\subsubsection{Concentrated Receiver Window}
It is well known from radar literature that a low pulse duty cycle increase peak power, and thus the SNR for a matching receiver integrating then the same signal energy with less noise.    
This method was  proposed in~\cite{Lopez_PartialOOK} for improving Wi-Fi WUS.  
However, as pointed in~\cite{Lopez_PartialOOK}, by matching the detection window to the concentrated OOK, the SNR would increase at the cost of an increased sensitivity to timing error. 
Alternatively, one could also consider to only concentrate the detection window which would in turn improve the robustness against timing error at the cost of a decreased SNR. Hence, for both cases,  using a concentrated detection window leads to a trade-off between timing error robustness and SNR. 
Differently, if only concentrating  the transmitted OOK waveform while keeping the original detection window, timing robustness can be improved for the same SNR.

\section{5G LP-WUS Performance Evaluations}
In this section,  the benefits of shaped OOK designs discussed in Section V compared to rectangular OOK designs discussed in Section IV  are evaluated  
in the context of 3GPP simulations assumptions for LP-WUS.

OFDM  is configured with $\Nfft = 512 $, $\Ncp = 36$ and $f_{\rm sc}= 30 $ kHz.  The total signal bandwidth 
is 288 subcarriers, of which LP-WUS uses $144$ subcarriers inserted in the middle. Out of these, only $\Nsc =132$ are modulated, with guard bands of $N_{\rm GB}=6$ subcarriers on each side. Other subcarriers on adjacent channels are modulated by random QPSK symbols. 
LP-WUS carries $\Nbo=2$ information bits, Manchester-encoded into $\Nbit =4$ bits per OFDM symbol. 
3GPP TDL-C channel model   
is used with 300ns delay scaling and Rayleigh-faded taps according to 3km/h velocity at 2.6 GHz carrier frequency. 

At the receiver side,  both BPF and LPF are a 3rd order Butterworth filter with cutoff bandwidth matching the LP-WUS effective bandwidth of $\Nsc$ subcarriers.
ADC is with a 4-bit  quantization precision and a $3.86$ MHz sampling rate, corresponding to a downsampled transmitter sampling rate by a factor 4.  
The LR is assumed to have an automatic gain control, normalizing the received signal envelope  by its maximum peak and thus scaling it to the range $[0,\,1]$ before quantization.  
The reference time is assumed to be aligned with first channel tap. 
Manchester decoding is obtained by  segmenting the signal output from the ADC into $\Nbit$ segments.  The sample sums  for each pair of segments are then compared for info bit decision.

\begin{figure}[t] 
\vspace{-0.3cm}
\centering
\includegraphics[width=0.49\textwidth]{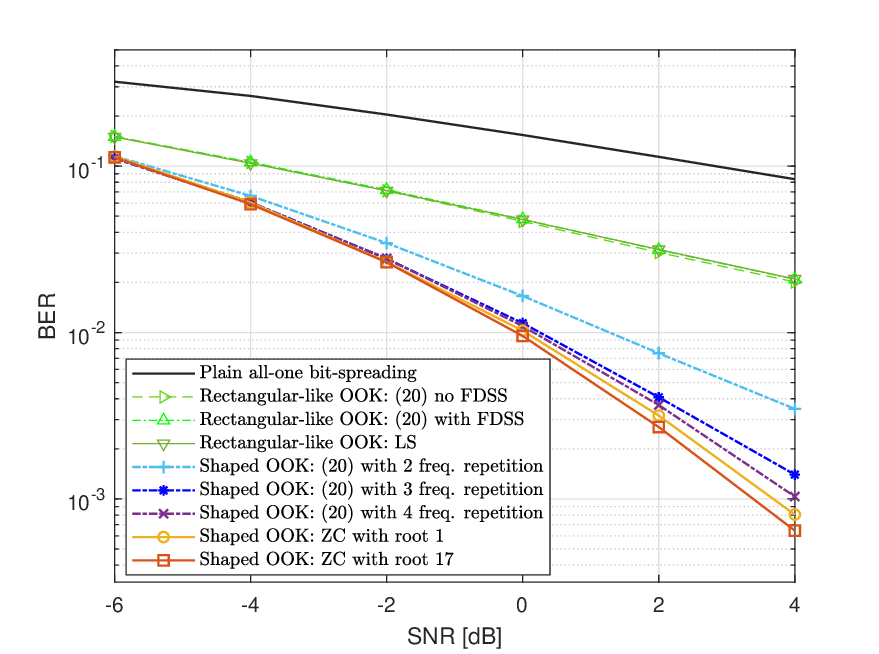}
\vspace{-0.7cm}
	\caption{BER performance comparison between different rectangular-like OOK designs and shaped-OOK designs with spread power sprectrum.}
	\label{fig:BER_DifferentWaveforms}
	\vspace{-0.5cm}
\end{figure}

\subsection{Performance Gain of OOK with FD Power Spreading over Rectangular-like OOK} 
Fig.~\ref{fig:BER_DifferentWaveforms} shows the bit error rate (BER) for different OOK designs under the assumption of perfect synchronization. 
The worst-performing curve corresponds to a plain DFT-s-OFDM modulation where bits are only trivially spread using an all-one sequence as $r[m]=1$, without additional processing (i.e. $L=\Phi=0$, $\Nsc =\Nsymb$ and no FDSS as $W[k]=1$). This method performs poorly because it fails to provide a good energy split between ON and OFF symbols, leading to an envelope shape similar  to the case of $\Phi=0$ in Fig.~\ref{fig:PhaseRampImpacta}. 
The rectangular-like OOK signals provide notable detection improvement, as they achieve an excellent energy split between ON and OFF symbols. Three different variations are considered: i) \eqref{eq:PhiFlat1} without FDSS; ii) \eqref{eq:PhiFlat1} with FDSS using  $\beta =4$ and a time-shifting of half a pulse;  and iii) LS waveform discussed in Sec. IV-C. 
These three rectangular-like OOK designs perform equivalently, showing that  envelope fluctuation inside ON symbols has no effect on performance.   
However, the performance of these designs is limited by their concentrated spectrum. 
By applying frequency repetition on top of rectangular-like design based on \eqref{eq:PhiFlat1} as discussed in Sec.~\ref{sec:FreqRep}, the performance  improves significantly  but saturates after 3 repetitions.  Here, 2, 3 and 4 repetitions are achieved by setting $\Nsymb=68$, $44$ and $32$, respectively. 
The best performance is obtained using a ZC overlaid sequence as discussed in Sec. V-A3. In this case, $L=\Phi=0$, $\Nsc =\Nsymb$ and there are no FDSS. The length of the ZC sequence is $\Nseg = 33$, with two roots, $u=1$ and $17$, considered. Both roots perform similarly, though the middle root $u=17$ performs slightly better. This can be attributed to a slightly better ON/OFF energy split with less OFF symbol fluctuation. 
 
\subsection{Performance Gain and Optimization of Concentrated OOK}  
\begin{figure}[t]
\vspace{-0.3cm}  
\centering
\subfigure[Comparison between normal and concentrated OOK. \label{fig:ConcentratedOOKa}]{\includegraphics[width=.44\textwidth]{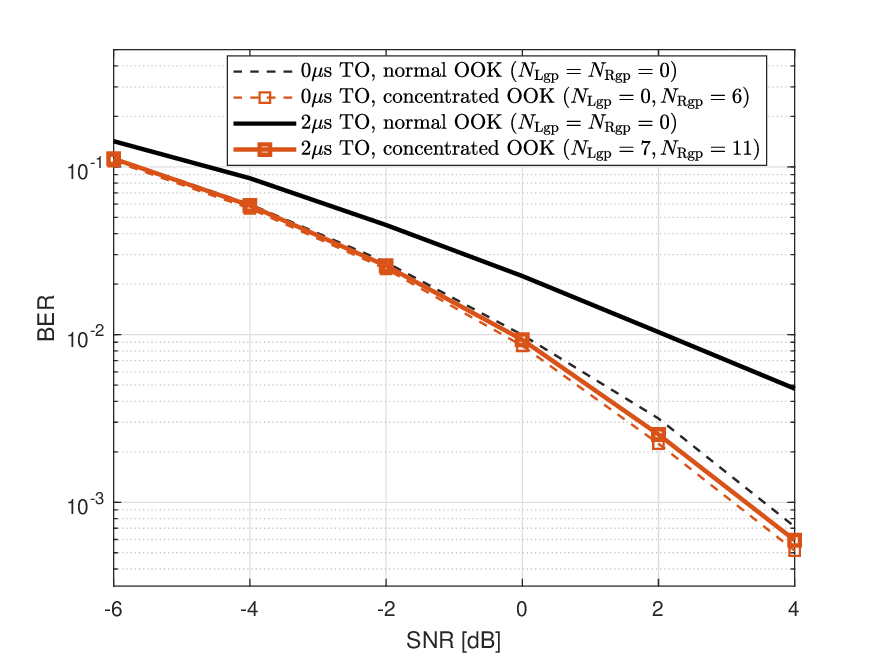}}
\subfigure[Different numbers of guard pulses.\label{fig:ConcentratedOOKb}]{\includegraphics[width=.24\textwidth]{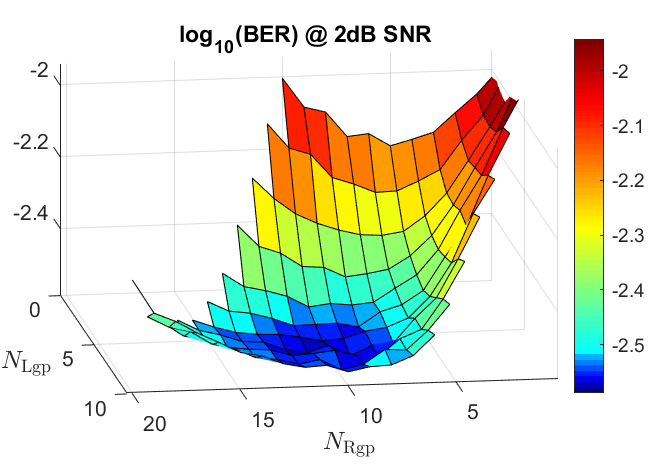}}
\subfigure[Concentration at Tx versus Rx.\label{fig:ConcentratedOOKc}]{\includegraphics[width=.24\textwidth]{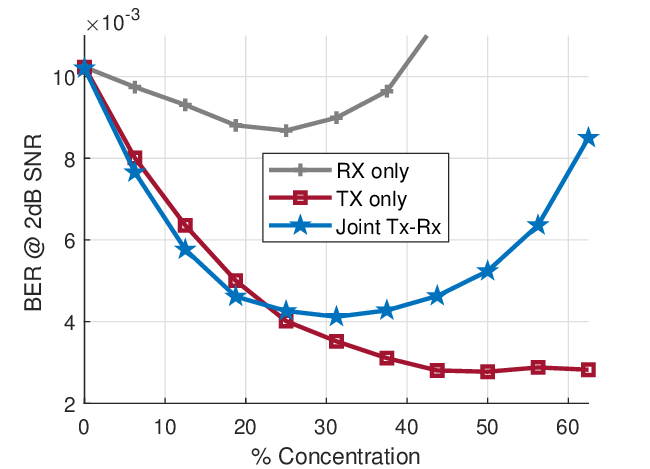}}
	\vspace{-0.3cm}
	\caption{BER evaluations of concentrated OOK with timing offset.}
	\label{fig:ConcentratedOOKAll}
	\vspace{-0.4cm}
	\end{figure}

In Fig~\ref{fig:ConcentratedOOKAll}, we evaluate the benefit of concentrated OOK for robustness against timing offset (TO). All curves use a ZC overlaid sequence with root $u=1$ whose length is   $\Nseg' =\Nseg -N_{\rm Lgp}- N_{\rm Rgp}$.  
In Fig~\ref{fig:ConcentratedOOKa}, with $\tau_{\rm err}= 0$ (i.e. no TO), we see that that increasing $N_{\rm Rgp}$  
up to $N_{\rm Rgp}=6$ can provide a fraction of dB gain. This corresponds to amortizing a channel spreading of $\tau_h = 1.5\,\mu$s, which accounts for absorbing $98\%$ of the average channel energy of the TDL-C channel model. 
With $\tau_{\rm err}=2\,\mu$s TO,  normal OOK experiences a 2dB SNR loss at $10^{-2}$ BER. Almost all of this performance drop can be recovered by concentrated OOK. 
Optimization on the number of guard pulses $(N_{\rm Lgp},N_{\rm Rgp})$ to minimize the BER variation at $2$dB SNR   is shown on Fig.~\ref{fig:ConcentratedOOKb}. It can be seen that the lowest BER (up to simulation precision) is obtained roughly in a zone $(5\leq N_{\rm Lgp}\leq 10,\,  7\leq N_{\rm Rgp}\leq 14)$, centered around $(N_{\rm Lgp},N_{\rm Rgp})= (7,11)$. This aligns with  the  anticipation in~\eqref{eq:NLgp} for compensating the maximum timing offset as here $\tau_{\rm err} \Nsc  f_{\rm sc} = 7.92$. However, the improvement from having an asymmetry $N_{\rm Lgp} \geq N_{\rm Lgp}$ shows diminishing returns as $N_{\rm Lgp}$ increases.  
Moreover, most BER gains can be achieved with fewer guard pulses, e.g. $(N_{\rm Lgp} ,\, N_{\rm Rgp})=(4,7)$, and additional concentration provides only marginal improvement. When concentration becomes too large, the BER starts to increase as a non-negligible portion of signal energy is lost during receiver downsampling.
Finally, in Fig.~\ref{fig:ConcentratedOOKc}, we considered concentrating the Rx detection window by comparing BER at $2$dB SNR and $\tau_{\rm err}=2\,\mu$s TO for three cases: i) only the TX signal design is concentrated, ii) only the RX window detection is concentrated, or iii) both are concentrated jointly. For the Tx signal-only concentration, $N_{\rm Lgp}=N_{\rm Rgp}$ and the original detection window is used as before, while with joint concentration the detection window is matching the concentrated ON symbol. Concentrating only the Rx detection window provides limited gain. Applying joint concentration offers marginal improvement for small concentration but becomes detrimental with larger concentrations. In these two cases,  a trade-off exists between timing error robustness and increased SNR. Ultimately, concentrating only the OOK signal appears to provide  the best performance.

\section{Conclusions} 
This paper studied a general framework for DFT-s-OFDM-based OOK modulation,  motivated by the recent adoption of OOK wake-up signals in modern wireless standards.  
Equivalent frequency-domain implementations were identified. Rectangular-like OOK signals with very flat OOK symbols and sharp edge transition were obtained, possibly with a zero-DC property, and recovering LS approach as a special case. Finally, frequency-domain and time-domain shaping methods useful for robustness against frequency-selective fading and timing error, respectively, were analyzed and evaluated. 
The results presented herein provide background and further insights into key signal design considerations of 3GPP LP-WUS study. 
5G-Advanced LP-WUS OOK signal design, whose details have been recently specified by 3GPP, is expected to be reused for other future IoT technologies as already done for 5G-Advanced Ambient-IoT.  

\section*{Acknowledgments}
The first author is grateful to Branislav M. Popovi\'{c} for valuable discussions on LP-WUS and his reviews of preliminary internal reports. The authors also thanks their 3GPP standardization colleagues Xue Yifan, Zhou Han, Li Qiang, Long Yi and Matthew Webb for fruitful discussions related to LP-WUS and Ambient-IoT. 

\appendix

\subsection{General Computation of Subcarrier Coefficients}
\label{App:GeneralComputation}
In a general setting of a bit spreading by $\Nseg = N_{x} / \Nbit$ where $N_{x}$ is  only restricted to be a factor of $ \Nbit$ such that $\Nseg$ is an integer, the modulation symbols $d[m]$ are then defined as in~\eqref{eq:d[m]} but for $m= 0,\ldots, N_{x}-1$, and the pre-processed subcarrier coefficients are  
$ D[k] = \sum_{m=0}^{N_{x}-1}d[m]  e^{- j \frac{2\pi}{N_{x}} km}$;  
the following result holds.
\begin{Lem} \label{Lem::D[K]general} 
\begin{equation}
D[k]  =  \sum_{l=0}^{\Nbit-1}  \tilde{R}_l[k]  b[ l]  e^{-\jrm\frac{2\pi}{\Nbit}k l }  \label{eq:D[k]asSumOfRl[k]}
\end{equation}
where the $\tilde{R}_l[k] = \sum_{m=0}^{\Nseg-1}   \tilde{r}_l[m] e^{-\jrm\frac{2\pi}{N_{x}}km}$ is the DFT of the $l$th overlaid sequence zero-padded to length $N_{x}$. 
\end{Lem}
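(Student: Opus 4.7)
The plan is a direct change of variables in the DFT sum, exploiting the fact that the data symbols $d[m]$ are naturally indexed by two coordinates: which bit they encode, and which position within that bit's segment they occupy. The key observation is that $\Nseg = N_x/\Nbit$ is an integer, so the arithmetic progression $m = 0, 1, \ldots, N_x - 1$ factors cleanly as $m = l\Nseg + m'$ with $l = 0, \ldots, \Nbit-1$ and $m' = 0, \ldots, \Nseg-1$, and under this factorization $l_m = l$ and $m \bmod \Nseg = m'$, so $d[m] = b[l]\tilde r_l[m']$.

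First I would substitute the definition of $d[m]$ from~\eqref{eq:d[m]} into $D[k] = \sum_{m=0}^{N_x-1} d[m] e^{-\jrm \frac{2\pi}{N_x} km}$ and then split the single summation over $m$ into the nested double summation over $(l, m')$ described above. Second, I would split the exponential additively as
\begin{equation}
e^{-\jrm \frac{2\pi}{N_x} k (l\Nseg + m')} = e^{-\jrm \frac{2\pi}{N_x} k \Nseg \, l}\, e^{-\jrm \frac{2\pi}{N_x} k m'}= e^{-\jrm \frac{2\pi}{\Nbit} k l}\, e^{-\jrm \frac{2\pi}{N_x} k m'},
\end{equation}
using $\Nseg/N_x = 1/\Nbit$. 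Third, since $b[l]$ and $e^{-\jrm \frac{2\pi}{\Nbit} kl}$ do not depend on $m'$, I would pull them outside the inner sum, leaving $\sum_{m'=0}^{\Nseg-1} \tilde r_l[m'] e^{-\jrm \frac{2\pi}{N_x} k m'}$, which is exactly the zero-padded DFT $\tilde R_l[k]$ as defined in the statement. Collecting terms yields~\eqref{eq:D[k]asSumOfRl[k]}.

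There is no real obstacle here: the content of the lemma is a rewriting of a single-index DFT as a two-index sum where the outer index indexes bits and the inner index indexes the overlaid-sequence samples; everything reduces to bookkeeping once one recognizes that $e^{-\jrm 2\pi k \Nseg / N_x} = e^{-\jrm 2\pi k / \Nbit}$. The only minor care needed is to note that the zero-padding in the definition of $\tilde R_l[k]$ is automatic: $\tilde r_l[m']$ for $m' \geq \Nseg$ is never touched, so extending the inner sum up to $N_x - 1$ (to make $\tilde R_l[k]$ an $N_x$-point DFT) or keeping it at $\Nseg - 1$ gives the same value, which is what ``zero-padded to length $N_x$'' encodes.
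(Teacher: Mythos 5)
Your proposal is correct and follows essentially the same route as the paper's own proof: both split the single DFT sum over $m$ into blocks of length $\Nseg$ indexed by the bit position $l$, pull out $b[l]$ since $l_m=l$ within each block, and factor the exponential using $\Nseg/N_x = 1/\Nbit$ to identify the inner sum with the zero-padded DFT $\tilde{R}_l[k]$. No gaps; the remark that the zero-padding is automatic is a fine (if unnecessary) clarification.
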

\begin{proof} 
\begin{eqnarray}  
D[k]&=& \sum_{m=0}^{N_{x}-1} b\left[ l_m \right] r[m] e^{-\jrm\frac{2\pi}{N_{x}}km}  \nonumber\\
 &=&  \sum_{l=0}^{\Nbit-1} \sum_{m=l\Nseg}^{(l+1)\Nseg-1} b\left[ l_m \right] r[m] e^{-\jrm\frac{2\pi}{N_{x}}km}\nonumber\\
 &=&  \sum_{l=0}^{\Nbit-1}b\left[ l \right]  \!\!\!\sum_{m=l\Nseg}^{(l+1)\Nseg-1}   \!\!\!\tilde{r}_{ l_m} [m \; {\rm mod  }\; \Nseg] e^{-\jrm\frac{2\pi}{N_{x}}km}\nonumber\\
 &=& \sum_{l=0}^{\Nbit-1}b\left[ l \right] \sum_{m=0}^{\Nseg-1}   \tilde{r}_l[m] e^{-\jrm\frac{2\pi}{N_{x}}k(m+l\Nseg)}.  
\end{eqnarray}
which leads to~\eqref{eq:D[k]asSumOfRl[k]} since $\Nseg= N_{x}/\Nbit$. 
\end{proof}
An interpretation of the result above is that the sequence $\tilde{R}_l[k]$ creates  an OOK symbol for the $l$th bit, and these OOK symbols are time-shifted and multiplexed together within the OFDM symbol in~\eqref{eq:D[k]asSumOfRl[k]} according to the corresponding bit positions and values. 
 
Under the assumption of~\eqref{eq:SingleBitSpreading}:    
 \mbox{$\tilde{r}_l[m] =  e^{\jrm \Phi (m+ l\Nseg)} r_0[m] $}, DFTs of the overlaid sequences can be expressed as 
\begin{equation} 
\tilde{R}_l[k] 
 = e^{\jrm \Phi l\Nseg }  R_0\left(k-\frac{\Phi N_{x}}{2 \pi} \right) \label{eq:Rl[k]}
\end{equation}
where $R_0(f) = \sum_{m=0}^{\Nseg-1}   r_0[m] e^{-\jrm \frac{2\pi m f }{N_{x}}} $ 
is the interpolated DFT of the   zero-padded common overlaid sequence $r_0[n]$,  independent of the bit index $l$. 
Then,~\eqref{eq:D[k]asSumOfRl[k]} can be rewritten  as the product of two DFTs:  
\begin{equation} 
D[k] =  R_0\left(k-\frac{\Phi  N_{x}}{2 \pi} \right) B\left(k-\frac{\Phi  N_{x}}{2 \pi} \right) \label{eq:D=BR_Nx}
\end{equation}
 where $B(f)= \sum_{l=0}^{\Nbit-1}  b\left[ l\right] e^{-\jrm \frac{2\pi l f}{\Nbit}  }$ is the interpolated DFT coefficients of the bit sequence.

Lem.~\ref{Lem:D=BRCont} follows directly from~\eqref{eq:D=BR_Nx} with  $N_{x} = \Nsymb$.

Cor.~\ref{Corr:XPhaseRamp} follows from~\eqref{eq:D=BR_Nx} by taking $N_{x} = \Nsymb$ and the assumption of a common  all-one overlaid sequence, i.e. $r_0[m]= 1, \, \forall m $  as then by  using the exponential sum formula one gets 
 $R_0(f) = e^{-\jrm \pi f\frac{(\Nseg-1)}{N_{x}}  } \frac{\sin \left(\frac{\pi}{\Nbit} f\right)}{\sin \left(\frac{\pi}{N_{x}} f\right)}$ for $f\neq 0 \ { \rm mod }\ N_{x}$, and $R_0(f) = \Nseg = N_{x}/\Nbit$ otherwise.  

Lem.~\ref{Prop:LS} follows  similarly by identifying~\eqref{eq:DFTLS} with Lem.~\ref{Lem::D[K]general} and then~\eqref{eq:D=BR_Nx}  where $\Phi =0$, $N_{x} = \Nfft$, $r_0[m]= 1, \, \forall m $, and simplification of $R_0(f)$ by exponential sum formula. The final result follows by applying the shift and truncation of~\eqref{eq:XLS}.

\subsection{DFT of Manchester-Coded Bits}
\label{App:DFT_MCbits}
By incorporating Manchester coding~\eqref{eq:MC} in the computation of $\Bcal[k]$, \eqref{eq:Bcal[k]} becomes

\begin{eqnarray}
\Bcal[k] &=& \sum_{n=0}^{\Nbo-1} \overline{b_o[n]} e^{-\jrm 2n\frac{2 \pi k}{\Nbit}} +b_o[n] e^{-\jrm (2n+1) \frac{2 \pi k }{\Nbit}} \nonumber \\
&=& \sum_{n=0}^{\Nbo-1} e^{-\jrm 2n\frac{\pi k}{\Nbo}} \left( \overline{b_o[n]}  +b_o[n] e^{-\jrm \frac{ \pi k }{\Nbo}} \right).
\end{eqnarray}

Without loss of generality, we can write 
\begin{eqnarray}
 \overline{b_o[n]}  +b_o[n] e^{-\jrm \frac{ \pi k }{\Nbo}}  &=& 
 \overline{b_o[n]}e^{-\jrm \frac{ \pi k }{\Nbo}b_o[n]}   +b_o[n] e^{-\jrm \frac{ \pi k }{\Nbo}b_o[n]} \nonumber\\
&=& e^{-\jrm \frac{\pi k }{\Nbo}b_o[n]}  ( \overline{b_o[n]}  +b_o[n]) \nonumber \\
&=& e^{-\jrm \frac{ \pi k }{\Nbo}b_o[n]}
\end{eqnarray}
and so $\Bcal[k] =\sum_{n=0}^{\Nbo-1} e^{-\jrm \frac{\pi k}{\Nbo}(2n+b_o[n])}$.

\subsection{Flattening the OOK Waveforms}
\label{App:Flat}
The phase difference between two neighboring pulses of indices $m$ and  $(m+1)$ is
\begin{equation}
 \angle \frac{g_{m+1} [n]}{g_m [n]} = -\frac{2\pi L}{\Nsymb} + \angle \frac{h\left[n-\frac{\Nfft}{\Nsymb}(m+1)\right]}{h\left[n-\frac{\Nfft}{\Nsymb}  m \right]} . 
\end{equation}

Then, if the FDSS window is real and symmetric, $W[k] = W_R[k]$,  Lem.~1 of~\cite{KimTVT18} gives 
\begin{equation}
 \angle \frac{h\left[n-\frac{\Nfft}{\Nsymb}(m+1)\right]}{h\left[n-\frac{\Nfft}{\Nsymb}m \right]}=-\frac{(\Nsc-1)}{\Nsymb}\pi + \theta[n].
\end{equation}
This can be extended directly to FDSS windows of the form $W[k] = e^{-\frac{\jrm 2 \pi }{\Nfft} T_{\rm shift} k} W_R[k]$ as such phase shift only changes the overall time position of the pulses.  

The function $ \theta[n]= \{0 \text{ or } \pi\}$ corresponds to a sign difference between the real part of the pulses, and changes as a function of $n$. In the case of $\Nsymb= \Nsc$ and no FDSS, one can verify from~\eqref{eq:Dirichlet} that $\theta[n]=0$  for the samples between the two neighboring main lobe peaks, and as $\Nsymb$ is decreased, $\theta[n]=0$ in a smaller range of samples around the crossing of the main lobes. Numerical investigation confirms it is also the case with common FDSS windows. 
So in general $\theta[n]=0$ where pulse combining should be the most avoided, which leads to 
\begin{equation}
\angle  \frac{g_{m+1} [n]}{g_m [n]}= -\frac{\pi(2L+\Nsc-1)}{\Nsymb}  
\end{equation}
Finally,  by selecting $\Phi = \phi_{m+1}-\phi_{m}= \frac{\pi(2L+\Nsc-1)}{\Nsymb} $ we get $\angle  \frac{e^{\jrm \phi_{m+1}}g_{m+1} [n]}{e^{\jrm \phi_{m}}g_m [n]}= 0$. 

\subsection{LS Background}
\label{App:LS}
In~\cite{Mazloum20}, the subcarrier coefficients are obtained as~\cite[Eq. (8)]{Mazloum20} $\tilde{\xbf}_{LS}=(\tilde{\Fbf}^H \tilde{\Fbf})^{-1} \tilde{\Fbf}^H \bbf_m $ where $\bbf_m \in \{ 0, 1\}^{\Nfft \times 1}$ contains an upsampled version of the bits string $b[m]$ matching the OFDM IFFT size (i.e. $b_{\rm r}^{\rm LS}[n]$ in~\eqref{eq:DFTLS}), $\tilde{\Fbf} = \frac{1}{\Nfft} [e^{j\frac{2\pi m n}{\Nfft}}]_{\substack{n \in \Kcal_c \\0\leq m <\Nfft }
}$ is a truncated  $\Nfft \times \Nsc$ IDFT matrix whose columns have been restricted to the shifted indexes $\Kcal_c= (\Kcal-k_c) \; \text{mod} \, \Nfft $ with $\Kcal= f_0 + \{0,\ldots, \Nsc-1 \}$ corresponding to the LP-WUS subcarrier allocation and $k_c$ being the center index of the subset $\Kcal$. 
The value of the middle subcarrier $k_c$ is not explicitly given in~\cite{Mazloum20}, so we will conventionally take $k_c =f_0+ \lfloor \frac{\Nsc}{2} \rfloor$ 
 but  the alternative $k_c = f_0+ \lceil \frac{\Nsc}{2} \rceil$ could  also be selected.  

In 3GPP LP-WUS study~\cite{3GPPTR38.869}, this construction was often discussed as a specific type of precoding, different from DFT, while it is actually  simply a large DFT where only some outputs are retained. Indeed, the regularization matrix $(\tilde{\Fbf}^H \tilde{\Fbf})^{-1}$ is given following the generic least-squares formulation, however here this simplifies to $(\tilde{\Fbf}^H \tilde{\Fbf})^{-1}= \Nfft \Ibf_{\Nsc}$ such that this method reduces to $\tilde{\xbf}_{LS}=\bar{\Fbf} \bbf_m $ where $\bar{\Fbf} =  [e^{-\jrm\frac{2\pi m n}{\Nfft}}]_{\substack{0\leq n <\Nfft  \\ m \in \Kcal_c }}$ is a truncated $ \Nsc \times \Nfft $ DFT matrix where only  $\Nsc$ rows/outputs corresponding of subcarriers indices in $\Kcal_c $  are preserved. 
Note that  
the selected DFT output indices $\Kcal_c$ corresponds simply to the baseband subcarrier indices
$\Kcal_c=  \{-\lfloor  \frac{\Nsc}{2} \rfloor,\ldots, \lfloor  \frac{\Nsc-1}{2} \rfloor\} \; \text{mod}\, \Nfft $.   We encountered  
that this indexing notation in~\cite{Mazloum20} can be misunderstood by some readers.  
This is reflected in~\cite{Ericsson23} which is built on improving the LS approach from~\cite{Mazloum20} but  their description of it  
corresponds to use   
$\Kcal_c= \{0,\ldots, \Nsc-1 \}$  instead.   
Hence, the ``LS approach'' in~\cite{Ericsson23} has a much poorer rectangular shape than it should have.

\subsection{Proof of Lem.~\ref{Lem:E[|D_0[k]|^2]}}
\label{App:ProofLemPSD}
We start with the average power of the subcarrier coefficients given by $ E[|D_0[k]|^2] =  |R_0[k]|^2 E[|B[k]|^2]$ from \eqref{eq:D0=BR}. 

\subsubsection{Proof of \eqref{eq:E[|B[k]|^2]}}
The average $E[|B[k]|^2]$ is derived under the typical assumption of uniformly and independently distributed information bits. 
As given in Corr.~\ref{Corr:D=BR}, $B[k]  =   \Bcal[k\; {\rm mod }\; \Nbit]$ has a repetitive structure, so we need only to compute $E[|\Bcal[k]|^2]$.

\paragraph{With Manchester coding} $\Bcal[k]$ is given in \eqref{eq:B[K]MC} from Lem.~\ref{Lem::B[K]MC}.  
First, it is direct to verify $E[|\Bcal[0]|^2] = |\Bcal[0]|^2=\Nbo^2$. Then if $k\neq0$, by expansion we have 
\begin{equation}
|\Bcal[k]|^2 \!= \Nbo+ \!\!\sum_{\substack{n,m=0\\n\neq m} }^{\Nbo-1} e^{-\jrm \frac{2\pi k}{\Nbo}(n-m)} e^{-\jrm \frac{\pi k}{\Nbo}(b_o[n]-b_o[m])}.
\end{equation}
Averaging this term above, simplifications are obtained by observing that since $n\neq m$ in the sum the bits are independent, $E[e^{-\jrm \frac{\pi k}{\Nbo}b_o[n]}] = \frac{1}{2}(1+e^{-\jrm \frac{\pi k}{\Nbo}})$, and that $|1+e^{-\jrm \frac{\pi k}{\Nbo}}|^2 =2(1+ \cos \frac{\pi k}{\Nbo})$  by expansion and using Euler's formula. 
Also, we have 
\begin{eqnarray} 
\sum_{\substack{n,m=0\\n\neq m} }^{\Nbo-1} e^{-\jrm \frac{2\pi k}{\Nbo}(n-m)}   
&=&\sum_{n,m=0 }^{\Nbo-1} e^{-\jrm \frac{2\pi k}{\Nbo}(n-m)} - \Nbo \nonumber \\
&=& 0-\Nbo= -\Nbo.   \label{eq:DoubleSumIdentity}
\end{eqnarray}

Putting all that together, we get
\begin{eqnarray}
E[|\Bcal[k]|^2]\!\!\!\! &=& \!\!\!\! \Nbo+ \!\sum_{\substack{n,m=0\\n\neq m} }^{\Nbo-1} e^{-\jrm \frac{2\pi k}{\Nbo}(n-m)} E[e^{-\jrm \frac{\pi k}{\Nbo}(b_o[n]-b_o[m]) }]\nonumber \\
&=& 
\Nbo+ \frac{1}{2}\left(1+\cos \frac{\pi k}{\Nbo}\right) \sum_{\substack{n,m=0\\n\neq m} }^{\Nbo-1} e^{-\jrm \frac{2\pi k}{\Nbo}(n-m)} \nonumber \\ 
&=& \frac{\Nbo}{2}\left(1-\cos \frac{\pi k}{\Nbo}\right) 
\end{eqnarray}
which leads to \eqref{eq:E[|B[k]|^2]}.

\paragraph{Without Manchester coding} To derive the equivalent result without Manchester coding,  we average the power of~\eqref{eq:Bcal[k]} which is $|\Bcal[k]|^2 =\sum_{n,m=0}^{\Nbit-1} b[n]b[m]  e^{-\jrm \frac{2\pi k}{\Nbit}(n-m)} $.

For $k=0$,  
using the fact that for two independent bits we have $E[b[n]b[m]]= \frac{1}{4}$ and otherwise $E[b[n]^2]=\frac{1}{2}$, we have 
\begin{eqnarray}
E[|\Bcal[0]|^2] &=& \frac{\Nbit}{2} +\frac{1}{4}\sum_{\substack{n,m=0\\n\neq m} }^{\Nbit-1} 1 
 = \frac{\Nbit}{4} + \frac{1}{4}\sum_{n,m=0}^{\Nbit-1} 1\nonumber \\
&=&\frac{1}{4}\left(\Nbit^2 +\Nbit \right).
\end{eqnarray}

For the case $k\neq 0$, splitting again the case of dependent and independent bits  
and using a similar identity as~\eqref{eq:DoubleSumIdentity}, we have 
\begin{eqnarray}
E[|\Bcal[k]|^2] &=& \frac{\Nbit}{2}+ \frac{1}{4}\sum_{\substack{n,m=0\\n\neq m} }^{\Nbit-1} 
 e^{-\jrm \frac{2\pi k}{\Nbit}(n-m)} \nonumber \\
&=& \frac{\Nbit}{2}- \frac{\Nbit}{4} = \frac{\Nbit}{4}. 
\end{eqnarray}
Putting all together, we have 
\begin{equation}   
E[|B[k]|^2] = 
\begin{cases}    \frac{\Nbit(\Nbit+1)}{4}   \quad\quad   {\rm for} \;\;  ( k \,{\rm mod}  \, \Nbit) =0\\
    \frac{\Nbit}{4}  \quad {\rm otherwise}
		\end{cases}. \label{eq:E[B2]_NotMC}
\end{equation}

\subsubsection{Half Power Allocation on a Regular Comb}
For a $N$-length sequence $s=(s[0],\ldots, s[N-1])$, let us write its power as $\| s\|^2 = \sum_{n=0}^{N-1} |s[n]|^2$.  The $N$-DFT is a unitary operation if normalized by $1/\sqrt{N}$, so the DFT sequence $S[k] = \sum_{n=0}^{N-1}s[n] e^{-\jrm 2\pi \frac{nk}{N}}$ has power $\| S\|^2 = N \|  s\|^2$. 

With Manchester coding,  the bits string always satisfies $\|b \|^2 = \Nbo$, so the modulating symbols~\eqref{eq:d[m]} fulfills $\|d\|^2 =  \Nbo \| r_0 \|^2 $,  and the   sum  power of subcarrier coefficients~\eqref{eq:D[k]} is  $\|D\|^2 =  \Nbo \Nsymb \| r_0 \|^2 $.  
The share of this sum power on the comb $\Ccal= \{k'\Nbit\}_{k'=0}^{\Nseg-1}$ can be computed to be  
\begin{eqnarray}
\sum_{k\in \Ccal } |D_0[k]|^2\!\!\!\! &=& \!\!\!\!\! \sum_{k'=0}^{\Nseg-1} |R_0[k' \Nbit ]| ^2 |B[k' \Nbit ]|^2 \nonumber\\
 &=&\Nbo^2 \sum_{h=0}^{\Nseg-1}|R_0'[h ]| ^2  \nonumber\\
&=&  \Nbo^2 \Nseg \|r_0\|^2   
\end{eqnarray}
where the second equality follows from \eqref{eq:R_0[k]Interpolation} and  $|B[k' \Nbit ]|^2 = |\Bcal[0]|^2 =\Nbo^2$ due to the repetitive structure of $B[k]$. 
Since $\frac{\sum_{k\in \Ccal}  |D[k]|^2}{\|D\|^2} =  \frac{\Nbo \Nseg}{\Nsymb}=\frac{1}{2}$, this comb always carries half of the sum power.

Without Manchester coding, the \emph{average} power of the bits string is $E[\|b \|^2 ]= \Nbit/2$, so on average we also have $E[\|d\|^2] = \Nbit/2 \| r_0 \|^2 $,  and  $E[\|D\|^2] = \| r_0 \|^2 \Nsymb\Nbit/2  $.  
From~\eqref{eq:E[B2]_NotMC}, the share of average power on the comb  becomes
\begin{equation}
\sum_{k\in \Ccal}  E[|D_0[k ]|^2]  
=  \frac{\Nbit (\Nbit+1)}{4}  \Nseg \|r_0\|^2 
\end{equation}
and the ratio $\frac{\sum_{k\in \Ccal}  E[|D[k]|^2]}{E[\|D\|^2]} =  \frac{(\Nbit+1)}{2\Nbit} > \frac{1}{2}$.

\subsection{Proof of~\eqref{eq:R_0[k]Interpolation}}
\label{App:R0[k]}
We have 
$ R_0[k]= \sum_{l=0}^{\Nseg-1}r_0[l] e^{-\jrm  \frac{2\pi kl}{\Nsymb} }$, while by definition $ r_0[l]=\frac{1}{\Nseg} \sum_{h=0}^{\Nseg-1}R'_0[h] e^{\jrm  \frac{2\pi lh}{\Nseg} }$. Combining them and using the fact that $\Nsymb = \Nseg \Nbit$, we get 
\begin{equation}
R_0[k]= \frac{1}{\Nseg} \sum_{h=0}^{\Nseg-1} R'_0[h]
\sum_{l=0}^{\Nseg-1}  e^{-\jrm  \frac{ 2\pi l(k-\Nbit h)}{\Nsymb} }.
\end{equation}

If $k=k'\Nbit$ with $k'$ integer,  
\begin{equation}
R_0[k' \Nbit]= \frac{1}{\Nseg} \sum_{h=0}^{\Nseg-1} R'_0[h]
\sum_{l=0}^{\Nseg-1}  e^{-\jrm  \frac{ 2\pi l(k'-h)}{\Nseg} }.
\end{equation}
The inner sum above is equal to zero  if $h\neq k'$ and otherwise to $\Nseg$, so we have $R_0[k' \Nbit]=R'_0[k'] $ which gives the first line of~\eqref{eq:R_0[k]Interpolation}.

If $k\neq k' \Nbit$, using the sum exponential formula, 
\begin{equation}
\sum_{l=0}^{\Nseg-1}  \!\!\!\!e^{-\jrm  \frac{ 2\pi l(k-\Nbit h)}{\Nsymb}}
\!=\! \frac{\sin\left(\pi\frac{(k-\Nbit h)}{\Nbit} \right)}{\sin\left(\pi\frac{(k-\Nbit h)}{\Nsymb} \right)} e^{-\jrm \frac{\pi (\Nseg -1)}{\Nsymb}(k-\Nbit h)}
\end{equation}
and so we get the second line of~\eqref{eq:R_0[k]Interpolation}.

\addcontentsline{toc}{chapter}{Bibliography}
\bibliographystyle{IEEEtran} 
\bibliography{mybibfile}

\end{document}